\documentclass[11pt,a4paper]{article}
\usepackage{amssymb}
\usepackage{amsmath}
\usepackage{amsfonts,bm}
\usepackage[pdftex]{graphicx}
\graphicspath{{./graphics/}}
\DeclareGraphicsExtensions{.pdf,.png,.jpg}
\usepackage{amscd}
\usepackage{a4wide}
\usepackage{microtype}
\usepackage{bbm}
\usepackage{slashed}
\usepackage{fancyhdr}
\usepackage{amsthm}
\usepackage{mathrsfs}
\usepackage{hyperref}
\usepackage{color} 
\usepackage{pstricks}
\usepackage{enumerate}
\usepackage{epstopdf}
\usepackage{calligra}
\usepackage{dsfont}


\definecolor{darkred}{rgb}{0.8,0.1,0.1}
\hypersetup{
     colorlinks=true,         
     linkcolor=darkred,
     citecolor=blue,
}

\usepackage{comment}
\usepackage[all,cmtip]{xy}

\usepackage{marvosym}

\usepackage{geometry}
\geometry{
a4paper,
total={170mm,257mm},
right=20mm,
left=20mm,
top=20mm,
bottom=28mm,
}
\newtheorem{theorem}{\rmfamily\bfseries{Theorem}}[section]

\newtheorem{lemma}[theorem]{\rmfamily\bfseries{Lemma}} 
\newtheorem{proposition}[theorem]{\rmfamily\bfseries{Proposition}}


\theoremstyle{remark}

\newtheorem{remark}{Remark}

\newtheorem*{notation}{Notation}






\numberwithin{equation}{section}

\def\supp{\mathrm{supp}}

\def\dim{\mathrm{dim}}

\def\1{\mathbbm{1}}

\newcommand{\oA}{{\mathds A}}
\newcommand{\oB}{{\mathds B}}
\newcommand{\oC}{{\mathds C}}

\newcommand{\oN}{{\mathds N}}

\newcommand{\oR}{{\mathds R}}

\newcommand{\un}{{\mathds 1}}

\usepackage{xparse}

\NewDocumentCommand{\oldnorm}{sO{}m}{%
  {\IfBooleanTF{#1}
    {\oldnormaux{\left|}{\right|}{#3}}
    {\oldnormaux{#2|}{#2|}{#3}}}
}
\makeatletter
\newcommand{\oldnormaux}[3]{\mathpalette\oldnormaux@i{{#1}{#2}{#3}}}
\newcommand{\oldnormaux@i}[2]{\oldnormaux@ii#1#2}
\newcommand{\oldnormaux@ii}[4]{%
  \sbox\z@{$\m@th#1#2#4#3$}%
  \sbox\tw@{$\m@th\|$}%
  \mathopen{\hbox to\wd\tw@{\hss\vrule height \ht\z@ depth \dp\z@ width .3\wd\tw@\hss}}%
  #4
  \mathclose{\hbox to\wd\tw@{\hss\vrule height \ht\z@ depth \dp\z@ width .3\wd\tw@\hss}}%
}
\makeatother

\title{\bf{Spectral properties of the 2D magnetic Weyl-Dirac operator
with a short-range potential}}

\author{Magno B. Alves,$^{*}$ Oswaldo M. Del Cima,$^{\dag}$ \\ Daniel H.T. Franco$^{\dag}$
       and Emmanuel A. Pereira$^{\ddag}$ \vspace{4mm}\\
       $^{*}$ Departamento de Matem\'atica,\\
       Universidade Federal de Juiz de Fora,\\
      Juiz de Fora, MG, Brasil, CEP: 36.036-900.\vspace{4mm}\\
      $^\dag$ Departamento de F\'\i sica-CCE,\\
      Universidade Federal de Vi\c cosa,\\
      Vi\c cosa, MG, Brasil, CEP: 36570-900.\vspace{4mm}\\
      $^\ddag$ Departamento de F\'\i sica-ICEx\\
      Universidade Federal de Minas Gerais,\\
      Belo Horizonte, MG, Brasil, C.P. 702, CEP: 30161-970.\vspace{4mm}\\
{\small e-mail: \texttt{magno\underline{~}branco@yahoo.com.br, oswaldo.delcima@ufv.br,}} \\
{\small \texttt{daniel.franco@ufv.br, emmanuel@fisica.ufmg.br}}\vspace{4mm}\\
}

\date{\today}


\begin{document}

\maketitle

\begin{abstract}
This paper is devoted to the study of the spectral propperties of the Weyl-Dirac
or massless Dirac operators, describing the behavior of quantum quasi-particles
in dimension 2 in a homogeneous magnetic field, $B^{\rm ext}$, perturbed by a
chiral-magnetic field, $b^{\rm ind}$, with decay at infinity and a short-range scalar electric
potential, $V$, of the Bessel-Macdonald type. These operators emerge from the action of a pristine
graphene-like QED$_3$ model recently proposed in~\cite{WOE}. First, we establish
the existence of states in the discrete spectrum of the Weyl-Dirac operators between the zeroth and
the first (degenerate) Landau level assuming that $V=0$. In sequence, with $V_s \not= 0$, where $V_s$
is an attractive potential associated with the $s$-wave, which emerges when analyzing the $s$- and
$p$-wave M{\o}ller scattering potentials among the charge carriers in the pristine graphene-like QED$_3$
model, we provide lower bounds for the sum of the negative eigenvalues of the operators
$|\boldsymbol{\sigma} \cdot \boldsymbol{p}_{\boldsymbol{A}_\pm}|+ V_s$. Here, $\boldsymbol{\sigma}$
is the vector of Pauli matrices, $\boldsymbol{p}_{\boldsymbol{A}_\pm}=\boldsymbol{p}-\boldsymbol{A}_\pm$, 
with $\boldsymbol{p}=-i\boldsymbol{\nabla}$ the two-dimensional momentum operator and
$\boldsymbol{A}_\pm$ certain magnetic vector potentials. As a by-product of this, we have the stability
of bipolarons in graphene in the presence of magnetic fields.
\end{abstract}

\vspace{3mm}

\,\,\,{\bf Keywords}. Magnetic potential, Weyl-Dirac operator,  Bessel-Macdonald potential,
self-adjoint operators, eigenvalues, Landau levels, magnetic Lieb-Thirring inequality.

\,\,\,{\bf AMS subject classifications}. 46E30, 46E35, 46N20, 46N50, 47A75, 81Q05, 81Q10, 81Q15.

\section{Introduction}
\label{Sec1}
\hspace*{\parindent}
The pristine graphene, a monolayer of pure graphene, is a gapless quasi-bidimensional system behaving like 
a half-filling semimetal where the quasi-particles, charge carriers, can be described by a two-dimensional massless
Dirac operator, with the speed of light $c$ being replaced by the Fermi velocity, $v_F \approx 10^{-2}c$.
Due to its unusual properties it has attracted a great deal of attention since its discovery. Such exciting properties
and perspectives are a direct consequence of the fact that the low-energy properties of quasi-particles in graphene
can be described by the model based on the continuum limit of the tight binding approximation which obeys a
relation formally identical to the massless Dirac equation in (1+2)-dimensions, with the holes and the pseudospin
states of the $\oA$ and $\oB$ sublattices being the counterparts of the positrons and the spin,
respectively. For this reason, this genuinely two-dimensional material provides a bridge between condensed matter
physics and quantum electrodynamics in (1+2)-dimensions. 

In the current paper, we focus on the planar massless Dirac operators with a Bessel-Macdonald potential
that emerge from the action of a pristine graphene-like QED$_3$ model proposed in Refs.~\cite{{WOE,Em}},
associated with two kinds of massless fermions $\Psi_+$ and $\Psi_-$ -- each of them describing electron-polaron
(electron-phonon) and hole-polaron (hole-phonon) quasi-particles -- where the subscripts $+$ (sublattice $\oA$)
and $-$ (sublattice $\oB$) are related to the two inequivalent $\boldsymbol{K}$ and $\boldsymbol{K'}$
points in the Brillouin zone of a monolayer graphene. We will analyze these operators in the presence of magnetic 
fields; namely, the energy operators corresponding to a quasi-particle in a force field of another quasi-particle and
subject to relativistic effects in the presence of magnetic fields are of the form (for simplicity the units are chosen
so that $\hbar=v_F=1$)
\begin{align}
H_\pm(x)=
-i\boldsymbol{\sigma} \cdot (\boldsymbol{p}-\boldsymbol{A}_\pm)
+V\,\,,
\label{OpBR}
\end{align}
with $V(x)=\gamma K_0(\beta |x|)$ being a short-range potential of the Bessel-Macdonald type. The operators
$H_\pm$ act on the two components of the spinors $\Psi_\pm$ and we shall say that $\psi_\pm$ are the
{\em upper components} and $\vartheta_\pm$ the {\em lower components}, respectively. In (\ref{OpBR}),
$\boldsymbol{\sigma}=(\sigma_1;\sigma_2)$ are the Pauli $2 \times 2$-matrices
\begin{align*}
\sigma_1=
\begin{pmatrix}
0 & ~~1 \\[3mm]
1 & ~~0
\end{pmatrix}
\quad , \quad
\sigma_2=
\begin{pmatrix}
0 & -i \\[3mm]
i & ~~0
\end{pmatrix}\,\,,
\end{align*}
and $\boldsymbol{D}_{\boldsymbol{A}_\pm}=\boldsymbol{\nabla}-i\boldsymbol{A}_\pm$ are the magnetic
gradients. $\boldsymbol{A}_\pm \overset{\rm def.}{=} e\boldsymbol{A}^{\rm ext} \pm g\boldsymbol{a}^{\rm ind}$,
where $\boldsymbol{A}^{\rm ext}$ and $\boldsymbol{a}^{\rm ind}$ are, respectively, vector potentials associated with a high
external and homogeneous magnetic field $B^{\rm ext}$ and a perturbation {\em induced} within the bulk of the system
$b^{\rm ind}$~\cite{WOE,Em}, called chiral-magnetic field~\cite{Ind-b}). The constants $e$ and $g$ are, in this order,
the coupling constants associated with the electric and chiral charges~\cite{WOE,Em}. Throughout this paper, we
assume that the vector potentials $\boldsymbol{A}^{\rm ext},\boldsymbol{a}^{\rm ind} \in L_p^{\rm loc}(\oR^2;\oR^2)$,
for some $2 \leqslant p \leqslant \infty$, and satisfy the well-known relations $B^{\rm ext}={\rm curl}~\boldsymbol{A}^{\rm ext}$
and $b^{\rm ind}={\rm curl}~\boldsymbol{a}^{\rm ind}$, which are understood in the sense of distributions.\footnote{Of course,
this is necessary if $\boldsymbol{A}^{\rm ext}$ and $\boldsymbol{a}^{\rm ind}$ are not differentiable.} Hereafter, each statement
containing the double subscript ``$\pm$'' must be understood separately for the upper subscript and the lower one.
This will allow one to state the results simultaneously for both operators $H_+$ and $H_-$.

\begin{remark}
We shall assume that the operators $H_\pm$ admit self-adjoint realizations, which are still denoted by
$H_\pm$ in $L_2(\oR^2;\oC^2)$.
\end{remark}

\begin{remark}
Unlike the case in $\oR^3$, in $\oR^2$ the fields $B^{\rm ext}$ and $b^{\rm ind}$ are pseudoscalars.
For the constant magnetic field $B^{\rm ext}$ orthogonal to the $x_1,x_2$-plane, we will fix the sign
$B^{\rm ext} > 0$ without loss of generality, since we can change the sign of $B^{\rm ext}$ by changing the
coordinates $(x_1,x_2) \to (x_2,x_1)$. Thus, with $B^{\rm ext} > 0$ it follows that the vector potential
$\boldsymbol{A}^{\rm ext}$ is given by $\boldsymbol{A}^{\rm ext}=(A_1^{\rm ext},A_2^{\rm ext})=\frac{B^{\rm ext}}{2}(-x_2,x_1)$.
In turn, in order to avoid an abrupt discontinuity of the chiral-magnetic field $b^{\rm ind}$ at the boundary of the system, we
will assume that $b^{\rm ind}$ has the profile described in Figure \ref{Induced-b-field}. Note that the support of the chiral-magnetic
field $b^{\rm ind}$ is the closed disk defined by $\supp~b^{\rm ind}=\overline{D} \overset{\rm def}{=}\overline{D(0;R_2)}
=\bigl\{x \in \oR^2 \mid |x| \leqslant R_2\bigr\}$. In other words, we are assuming that
$b^{\rm ind}$ is smooth, compactly supported and {\em decays sufficiently fast} outside the system. For example,
if we define $R_2=R_1+\varepsilon$, then we can write
\begin{align}
b^{\rm ind}(|x|)=
\begin{cases}
b^{\rm ind}\,\,, \quad & \text{if $R_1 \leqslant |x|$}\,\,; \\[3mm]
b^{\rm ind} \cdot {\exp}\Biggl[-\frac{\varepsilon}{R_2-|x|}
{\exp}\left(-\frac{\varepsilon}{|x|-R_1}\right)\Biggr]
\,\,,\quad & \text{if $R_1 < |x| < R_2$}\,\,; \\[3mm]
0\,\,, \quad & \text{if $|x| \geqslant R_2$}\,\,.
\end{cases}
\label{Induced-b-fieldA}
\end{align}

\begin{figure}
\begin{center}
\includegraphics[scale=0.50]{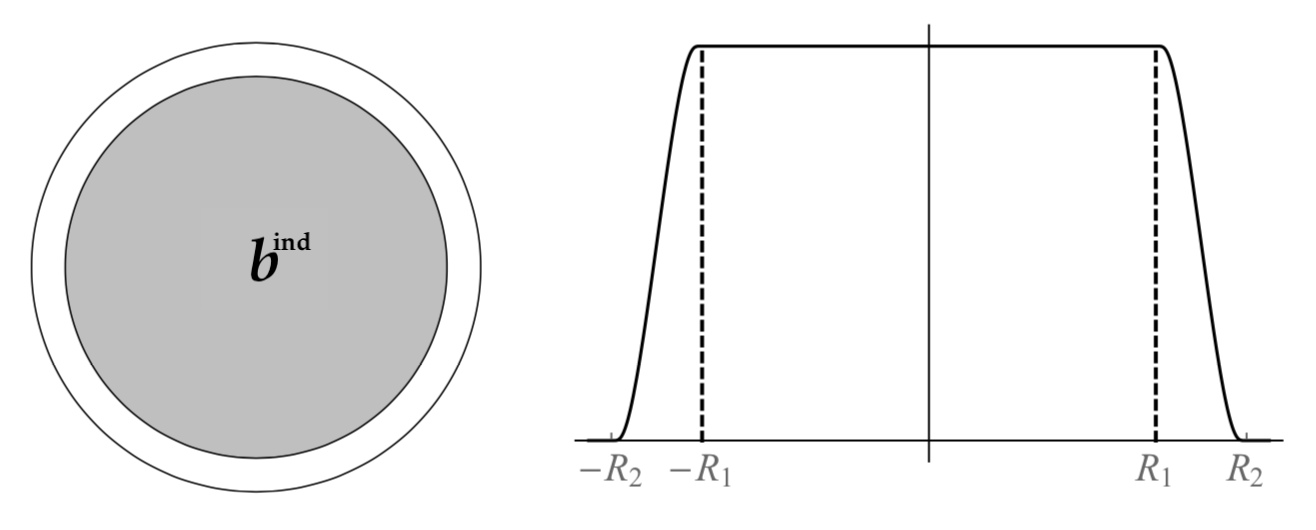}
\caption{Small sample of graphene in the shape of a disk. The support of chiral-magnetic field
$b^{\rm ind}$ (left) and a radial cross section (right).}
\label{Induced-b-field}
\end{center}
\end{figure}
\end{remark}

In (\ref{OpBR}), $\gamma > 0$ is the coupling parameter taken to be contained in the non-negative semi-axis $[0,\infty)$
and $K_0(\beta |x|)$ is the Bessel-Macdonald potential induced by
\begin{align*}
K_0(\beta |x|)
=\frac{1}{2} \int _{0}^{\infty} e^{-{\frac {\pi \beta^2 |x|^{2}}{\eta}}}
e^{-\frac{\eta}{4\pi}}~\frac{1}{\eta}\,d\eta\,\,,
\end{align*}
where $\beta > 0$ is a real parameter, which has an inverse length dimension (whose precise meaning is
given in Remark \ref{RemaK0} below). The $K_0$-potential arises from the parity-preserving
$U_A(1) \times U_a(1)$ massless QED$_3$ proposed in~\cite{{WOE,Em}} when analyzing the $s$- and $p$-wave
M{\o}ller scattering potentials among the charge carriers, written as $V_s(x)=\frac{1}{2\pi} (e^2-g^2)K_0(\beta |x|)$ and 
$V_p(x)=\frac{1}{2\pi} (e^2+g^2)K_0(\beta |x|)$, respectively. While the $p$-wave state fermion-fermion (or antifermion-antifermion)
scattering potential shows to be {\bf repulsive} whatever the values of the electric $(e)$ and chiral $(g)$ charges, for $s$-wave
scattering of fermion-fermion (or antifermion-antifermion), the interaction potential might be {\bf attractive} provided
$|g| > |e|$ -- here we take the magnitude of electric and chiral charges because these charges can take on positive
and negative values depending on the spin value of the spinors $\Psi_\pm$ as displayed in Table \ref{table1}.

\begin{table}
\begin{center}
\begin{tabular}{||c||c||c||c||c||}
\hline
Spinor & Electric charge & Chiral charge & Spin & Quasi-particle \\
\hline\hline
$\Psi_+$ & $-e$ & $-g$ & $+1/2$ & electron-polaron \\
\hline
$\Psi_+$ & $+e$ & $+g$ & $-1/2$ & hole-polaron \\
\hline
$\Psi_-$ & $-e$ & $+g$ & $-1/2$ & electron-polaron \\
\hline
$\Psi_-$ & $+e$ & $-g$ & $+1/2$ & hole-polaron \\
\hline
\end{tabular}
\end{center}
\caption[]{Electric and chiral charges of the quasi-particles (see more details in Ref.~\cite{WOE})).}
\label{table1}
\end{table}

The question of whether or not the attractive $s$-wave state potential favours $s$-wave massless bipolarons (two-fermion
bound states) has been answered in Ref.~\cite{MOD1}, where for a suitably projected two-dimensional massless Dirac
operator in the presence of a Bessel-Macdonald potential {\em without a magnetic field}, it has been proved the absence
of bound states if $\gamma \leqslant \gamma_{\rm crit}$ (the subcritical region where the matter is stable).

\begin{remark}
At this point, it is important to note that the typical length-scale of the interaction between the charge carriers in
graphene in the conduction band is orders of magnitude in nanometers~\cite{Kim}. This result indicates that, {\em necessarily},
the interaction between the charge carriers in graphene must be described by a short-range potential. Since graphene is a
{\em strictly} two-dimensional material~\cite{Novo}, this implies that the interaction among the massless fermion quasi-particles
is nonconfining, so the vector meson mediated quasi-particles contained in the model in~\cite{WOE,Em}, namely the photon and
the N\'eel quasi-particles, must be massive. In this case, the parameter $\beta$ that multiplies the argument of the $K_0$-function
has inverse length dimension, thus fixing a length scale, an interaction range, which is related to the mass of the boson-mediated
quantum exchanged during the two quasi-particle scattering (see~\cite{WOE,Em} and references therein). It is also worth noting
that massless mediated quanta in three space-time dimensions yield logarithm-type (confining) interaction potentials~\cite{Maris}.
Moreover, the Coulomb potential $1/|x|$ adopted in literature is a long-range potential, which contradicts, as mentioned above,
the experimental fact~\cite{Kim} that the interaction between the quasi-particles in graphene is short-range.
\label{RemaK0}
\end{remark}

Regarding the chiral magnetic field $b^{\rm ind}$ and the attractive potential $V_s(x)=-\frac{1}{2\pi} (g^2-e^2) K_0(\beta |x|)$
(associated with the $s$-wave) we would like to point out that they belong to the class of electromagnetic perturbations
considered in Ref.~\cite{Stock}, namely:

\,\,\,$(A1)$ $B_\pm=eB^{\rm ext} \pm gb^{\rm ind}$, where $B^{\rm ext} > 0$ and $b^{\rm ind} \in L_p^{\rm loc}(\oR^2;\oR)$,
for some $1 \leqslant p \leqslant \infty$, and $\displaystyle{\lim_{\kappa \to \infty} \|\chi_{_{\{|x| \geqslant \kappa\}}}} 
b^{\rm ind}\|_\infty=0$;

\,\,\,$(A2)$ $V_s \in L_p^{\rm loc}(\oR^2;\oR)$ for some $2 \leqslant p \leqslant \infty$ and 
$\displaystyle{\lim_{\kappa \to \infty} \|\chi_{_{\{|x| \geqslant \kappa\}}}} V\|_\infty=0$.

\,\,\,Here $\chi_{_{\{|x| \geqslant \kappa\}}}$ denotes the characteristic function on the set
$\{|x| \geqslant \kappa\}$. Assuming that $B_\pm$ fulfills $(A1)$ we can always find
$\boldsymbol{A}_\pm \in L_q^{\rm loc}(\oR^2;\oR^2)$, for some $2 \leqslant q \leqslant \infty$,
satisfying $B_\pm={\rm curl}~\boldsymbol{A}_\pm$.

\,\,\,The aim of this article is to establish some spectral properties of the operators $H_\pm$.
It is organized as follows. In Section \ref{Sec3}, in particular, the location of the essential spectrum, consisting
of isolated eigenvalues with infinite multiplicity (called quantum Landau levels), is obtained. Assuming that $V=0$,
the existence of states in the discrete spectrum of $H_\pm$ between the (degenerate) Landau levels is analysed
in Section \ref{Sec4}. The results of Sections  \ref{Sec3} and \ref{Sec4} are similar to those found by
K\"onenberg-Stockmeyer~\cite{Stock}. Our proofs follow the ideas developed in~\cite{Stock}. However, it must be
pointed out that the results of~\cite{Stock} hold for a 2D magnetic massless Dirac operator in the presence of a
perturbed homogeneous magnetic field without taking into account the fermionic spin degree of freedom of the
quasi-particles. From this point of view, our results can be seen as an extension of~\cite{Stock}. Section \ref{Sec5}
is dedicated to the proof of the magnetic Lieb-Thirring type inequality, {\em i.e.}, we provide lower bounds for the sum of
the negative eigenvalues of the operators $|\boldsymbol{\sigma} \cdot \boldsymbol{p}_{\boldsymbol{A}_\pm}|+V_s$.
Here, $\boldsymbol{p}_{\boldsymbol{A}_\pm}=\boldsymbol{p}-\boldsymbol{A}_\pm$, with
$\boldsymbol{p}=-i\boldsymbol{\nabla}$ the two-dimensional momentum operator, and
$V_s(x)=-\frac{1}{2\pi} (g^2-e^2) K_0(\beta |x|)$. For that, we benefit from the insights and arguments found
in~\cite{GoFo,Sen}. As a by-product of this, we have the stability of bipolarons in magnetic fields.

\begin{notation}
The notation $\|\Psi\|_p$ stands for the norm of $\Psi$ in the space $L_p(\oR^2)$, $p \geqslant 1$.
If $p=2$, we usually write simply $\|\Psi\|$ and for $p=\infty$, one uses $\|\Psi\|_\infty=\sup_{x \in \oR^2} |\Psi(x)|$. 
Notation $\langle\,\cdot\,,\,\cdot\,\rangle$ stands for the inner product in $L_2(\oR^2)$. For a self-adjoint operator $T$,
we shall use the notation $\boldsymbol{\mathfrak{Dom}}(T)$ for the domain of $T$. Finally, we will use $C$ to denote
constants, which are not necessarily the same at each occurrence, which may depend on $p,n$, {\em etc}.
\end{notation}

\section{Essential spectrum and compact perturbations}
\label{Sec3}
\hspace*{\parindent}
In this Section, the location of the essential spectrum of the operators (\ref{OpBR}), consisting of isolated eigenvalues
with infinite multiplicity, is obtained. The rest of the spectrum (the discrete spectrum) will be analyzed in Sections \ref{Sec4}
and \ref{Sec5}. We start by remembering that in quantum mechanics, Landau quantization refers to the quantization of the
cyclotron orbits of charged particles in a uniform magnetic field. As a result, the charged particles can only occupy orbits
with discrete, equidistant energy values, {\em i.e.}, the spectrum consists of eigenvalues of infinite multiplicity, the so-called
{\em Landau levels}, lying at the points of an arithmetic progression. These levels are degenerate, with the number of charged
particles per level directly proportional to the strength of the applied magnetic field.

Under a weak perturbation of the constant magnetic field, the eigenvalues, except the lowest one, may split, producing
a discrete spectrum between the Landau levels and a cluster of these eigenvalues around the Landau levels~\cite{Roze1,Roze2}.
In the case of the operators (\ref{OpBR}), assuming $V=0$, this splitting was found, with the energy spectrum of
operators $-i \boldsymbol{\sigma} \cdot \boldsymbol{D}_{\boldsymbol{A}_\pm}$ consisting of
degenerated eigenvalues that take the form~\cite{WOE,Em},
\begin{align}\label{LandauSpec}
\Lambda_{n,+,s}&= \pm\sqrt{2(eB^{\rm ext}+gb^{\rm ind})\left(n-s+1/2\right)}\,\,, \nonumber \\[3mm]
\Lambda_{n,-,s}&= \pm\sqrt{2(eB^{\rm ext}-gb^{\rm ind})\left(n-s+1/2\right)}\,\,,
\end{align}
with $n \in \oN$, where $\Lambda_{n,+,s}$ and $\Lambda_{n,-,s}$ are the Landau levels associated to $\Psi_+$
(at sublattice $\oA$) and $\Psi_-$ (at sublattice $\oB$) for electron-polarons (if $+$ sign in
$\Lambda_{n,+,s}$ and $\Lambda_{n,-,s}$) or hole-polarons (if $-$ sign in $\Lambda_{n,+,s}$ and $\Lambda_{n,-,s}$)
and $s=\pm\frac{1}{2}$ are the so-called sublattice spin (the pseudospin eigenvalues related to sublattices) and not
the real electron spin ({\em cf.} Ref.~\cite{MR}). The conceptual novelty is that the presence of $b^{\rm ind}$,
seen as a perturbation, leads to the splitting of eigenvalues $\Lambda_{n,+,s}$ and $\Lambda_{n,-,s}$, which mimic the four-fold
broken degeneracy effect of the Landau levels (Figure \ref{Landau_4_fold}) experimentally observed in pristine graphene under
high applied magnetic fields; see {\em e.g.}~\cite{WOE,Em} and references therein. In the case of graphene, each of these
levels is four times degenerate because of the spin and the sublattice degeneracy. 

\begin{figure}
\begin{center}
\includegraphics[scale=0.09]{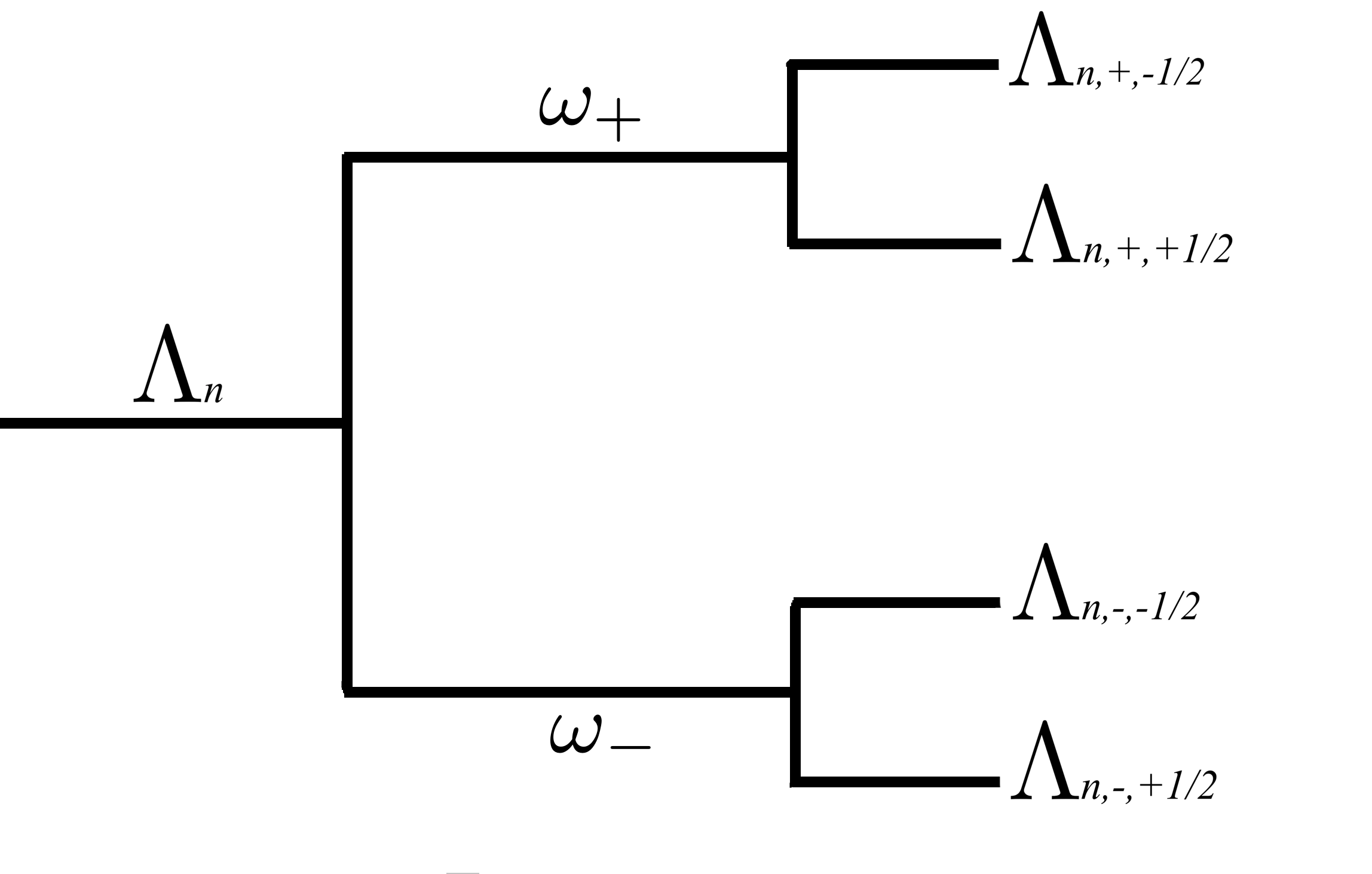}
\caption{Four-fold Landau levels of electron-polarons and hole-polarons at sublattices $\boldsymbol{A}$ and $\boldsymbol{B}$ 
with cyclotron frequencies $\omega_+=\sqrt{2(eB^{\rm ext}+gb^{\rm ind})}$ and $\omega_-=\sqrt{2(eB^{\rm ext}-gb^{\rm ind})}$,
respectively, provided $n \geqslant1$.}
\label{Landau_4_fold}
\end{center}
\end{figure}

It is important to emphasize that the lowest Landau level $(n = 0)$ appears at $\Lambda_{0,+,s}=\Lambda_{0,-,s}=0$ and
accommodates electron-polarons or hole-polarons with only one pseudospin eigenvalue, namely $s=+1/2$, signalizing a
possible anomalous-type quantum Hall effect (QHE) (the discovery of the anomalous QHE is the most direct evidence
of massless fermions in graphene~\cite{KN}). All other levels $n \geqslant 1$ are occupied by electron-polarons or
hole-polarons with both $(s=\pm 1/2)$ pseudospin eigenvalues. Therefore, this implies that for the lowest Landau level
$n=0$ the degeneracy is half of that for any other $n \geqslant 1$, likewise, all Landau levels $(n \geqslant 1)$ have the
same degeneracy (a number of electron-polaron or hole-polaron states with a given energy) but the zero-energy $(n=0)$
Landau level is shared equally by electron-polarons and hole-polarons, that is, depending on the sign of the applied
magnetic field there is only sublattice $\oA$ or sublattice $\oB$ states which contribute to the zero-energy
(lowest) Landau level.

\begin{remark}
In Ref.~\cite{Stock} although the authors consider a massless two-dimensional Dirac operator in the presence of a perturbed
homogeneous magnetic field $B=B_0+b$, the splitting in the four-fold Landau levels does not show up in the spectrum analysis
since they are considering only spinless quasi-particles. In turn, in Ref.\cite{KMSZ}, despite the authors considering two spinors, each
one related to the two inequivalent $\boldsymbol{K}$ and $\boldsymbol{K'}$ points in the Brillouin zone, the splitting in the
four-fold Landau levels also does not appear due to the fact that they are considering only one unperturbed magnetic field,
associated with a unique symmetry $U_A(1)$.
\end{remark}

\begin{remark}
In the absence of an external magnetic field, Schmidt~\cite{Sch} proved  that the essential spectrum of
massless Dirac operators with a rotationally symmetric potential (such as the $K_0$-potential) in two dimensions covers
the whole real line.
\end{remark}

Our approach to obtaining the essential spectrum of the operators (\ref{OpBR}) is mainly based on the study by
Rozenblum-Tashchiyan~\cite{Roze1,Roze2} and K\"onenberg-Stockmeyer~\cite{Stock}. In this way, we reproduce
the reasoning from~\cite{Roze1,Roze2,Stock}. First of all, it is useful to introduce the complex variable $z=x_1+ix_2$
and to define the ``creation'' and ``annihilation'' operators, respectively,
\begin{align}\label{CAOper}
D_\pm^* \overset{\rm def.}{=} -2i \partial_z+i \frac{B_\pm}{2} \overline{z}
\quad \text{and} \quad
D_\pm \overset{\rm def.}{=} -2i \partial_{\overline{z}}-i \frac{B_\pm}{2} z\,\,,
\end{align}
where  $\partial_z \overset{\rm def.}{=} \frac{1}{2}(\partial_{x_1}-i\partial_{x_2})$ and 
$\partial_{\overline{z}} \overset{\rm def.}{=} \frac{1}{2}(\partial_{x_1}+i\partial_{x_2})$. 

The operators (\ref{CAOper}) can also be expressed by means of the scalar potential of the magnetic field,
the function $\Phi$, solving the equation $\Delta \Phi_\pm=B_\pm$:
\begin{align*}
D_\pm^*=-2i e^{\Phi_\pm} \partial_z\,e^{-\Phi_\pm}
\quad \text{and} \quad
D_\pm=-2i e^{-\Phi_\pm} \partial_{\overline{z}}\,e^{\Phi_\pm}\,\,.
\end{align*}
Here, $\Phi_\pm=\Phi^\circ \pm \varphi$, where $\Phi^\circ(\overline{z},z)=\frac{eB^{\rm ext}}{4}\overline{z}z$.
On the other hand, since we are assuming that $b^{\rm ind}$ is smooth and compactly supported, then it can
be shown that the scalar potential $\varphi$ for the field $b^{\rm ind}$ must be solution of the equations 
$4\partial_z \varphi=b^{\rm ind} \overline{z}$ and $4\partial_{\overline{z}} \varphi=b^{\rm ind} z$.

It can be easily found that the creation and annihilation operators $D_\pm^*,D_\pm$ satisfy the following relation
\begin{align}
[D_\pm,D_\pm^*]=2B_\pm=2(eB^{\rm ext} \pm gb^{\rm ind})\,\,.
\label{CAOper1}
\end{align}

With the help of operators (\ref{CAOper}), the operators $-i \boldsymbol{\sigma} \cdot \boldsymbol{D}_{\boldsymbol{A}_\pm}$
take the very simple forms
 \begin{align*}
-i \boldsymbol{\sigma} \cdot \boldsymbol{D}_{\boldsymbol{A}_\pm}
\overset{\rm def.}{=}D_{B_\pm}
=\begin{pmatrix}
0 & D_\pm^* \\[3mm]
D_\pm & 0
\end{pmatrix}
\quad \bigl(\text{on $\boldsymbol{\mathfrak{Dom}}(D_{B_\pm})
=\boldsymbol{\mathfrak{Dom}}(D_\pm) \oplus \boldsymbol{\mathfrak{Dom}}(D_\pm^*)$}\bigr)\,\,.
\end{align*}
According to Thaller~\cite[Theorem 5.13]{Thaller} (see also~\cite{Thaller1,Stock}), the operators $D_{B_\pm}$ can be
diagonalized by a suitable unitary Foldy-Wouthuysen transformation, $U_{\rm FW}$, defined by
\begin{align*}
U_{\rm FW}=\frac{1}{\sqrt{2}}\Bigl[{\un}_{2 \times 2}
+\sigma_3~{\rm sgn}\bigl(D_{B_\pm}\bigr)\Bigr]
\quad \text{and} \quad
U_{\rm FW}^*=\frac{1}{\sqrt{2}}\Bigl[{\un}_{2 \times 2}
-\sigma_3~{\rm sgn}\bigl(D_{B_\pm}\bigr)\Bigr]\,\,,
\end{align*}
where ${\rm sgn}\bigl(D_{B_\pm}\bigr)=D_{B_\pm}/|D_{B_\pm}|$ on
$\boldsymbol{\mathfrak{Ker}}(D_{B_\pm})^\perp$ and equals zero on 
$\boldsymbol{\mathfrak{Ker}}(D_{B_\pm})$, with $|D_{B_\pm}|
=\bigl(D_{B_\pm}^2\bigr)^{1/2}$, and
\begin{align*}
\sigma_3=
\begin{pmatrix}
1 & ~~0 \\[3mm]
0 &-1
\end{pmatrix}\,\,.
\end{align*}
A direct computation yields
\begin{align}\label{D_FW}
\bigl(D_{B_\pm}\bigr)_{\rm FW}=
U_{\rm FW}
\bigl(D_{B_\pm}\bigr)
U_{\rm FW}^*=
\begin{pmatrix}
\sqrt{D_\pm^* D_\pm} & 0 \\[3mm]
0 & -\sqrt{D_\pm D_\pm^*}
\end{pmatrix}\,\,.
\end{align}

The operator ${\rm sgn}\bigl(D_{B_\pm}\bigr)$ is a unitary map from $\boldsymbol{\mathfrak{Ker}}(D_{B_\pm})^\perp$
onto $\boldsymbol{\mathfrak{Ker}}(D_{B_\pm})^\perp$. In the standard representation the operator
${\rm sgn}\bigl(D_{B_\pm}\bigr)$ is given by (see~\cite[p.144]{Thaller})
\begin{align*}
{\rm sgn}\bigl(D_{B_\pm}\bigr)=
\begin{pmatrix}
0 & S^* \\[3mm]
S & 0
\end{pmatrix}
\quad \bigl(\text{on $\boldsymbol{\mathfrak{Ker}}(D_{B_\pm})^\perp$}\bigr)\,\,.
\end{align*}
Hence, it follows immediately from the trivial calculation
\begin{align}\label{D_FW1}
\begin{pmatrix}
D_\pm^* D_\pm & 0 \\[3mm]
0 & D_\pm D_\pm^*
\end{pmatrix}
=D_{B_\pm}^2
={\rm sgn}\bigl(D_{B_\pm}\bigr) D_{B_\pm}^2 {\rm sgn}\bigl(D_{B_\pm}\bigr)
=\begin{pmatrix}
S^*D_\pm D_\pm^* S & 0 \\[3mm]
0 & S D_\pm^* D_\pm S^*
\end{pmatrix}
\end{align}
which holds on $\boldsymbol{\mathfrak{Ker}}(D_{B_\pm})^\perp$ and which shows that $D_\pm D_\pm^*$ and 
$D_\pm^* D_\pm$ are mapped onto each other by the isometry $S$. 

An important property that follows from the relation (\ref{D_FW1}) is the coincidence of the spectra
$\sigma\bigl(D_\pm^* D_\pm\bigr)=\sigma\bigl(D_\pm D_\pm^*\bigr)$ except at zero. This, along with Eq.(\ref{D_FW}),
implies immediately, from the spectral mapping theorem, the following (see~\cite[Proposition 1]{Stock})

\begin{proposition}
Let $\boldsymbol{A}_\pm \in L_2^{\rm loc}(\oR^2;\oR^2)$ satisfying $B_\pm={\rm curl}~\boldsymbol{A}_\pm$.
Then, the spectrum of $D_{B_\pm}$ is symmetric with respect to zero and
\begin{align*}
\sigma_\#(-i\boldsymbol{\sigma} \cdot \boldsymbol{D}_{\boldsymbol{A}_\pm}) \cap (0,\infty)
=\sigma_\#\bigl(\sqrt{D_\pm^* D_\pm}\bigr) \setminus \{0\}\,\,,
\quad \text{with $\# \in \{{\rm disc},{\rm ess}\}$}\,\,.
\end{align*}
\label{Prop1}
\end{proposition}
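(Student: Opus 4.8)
The plan is to read everything off the unitary diagonalisation (\ref{D_FW}) together with the non-negativity of $\sqrt{D_\pm^* D_\pm}$, the non-positivity of $-\sqrt{D_\pm D_\pm^*}$, and the identification (\ref{D_FW1}). Write $T_\pm := D_{B_\pm} = -i\boldsymbol{\sigma}\cdot\boldsymbol{D}_{\boldsymbol{A}_\pm}$ throughout. First I would record the symmetry of the spectrum. In the standard representation $T_\pm$ is off-diagonal, and a one-line computation gives $\sigma_3 T_\pm \sigma_3 = -T_\pm$, with the domain $\boldsymbol{\mathfrak{Dom}}(T_\pm)=\boldsymbol{\mathfrak{Dom}}(D_\pm)\oplus\boldsymbol{\mathfrak{Dom}}(D_\pm^*)$ left invariant by $\sigma_3$. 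Since $\sigma_3$ is unitary (and self-adjoint), conjugation by it is a unitary equivalence between $T_\pm$ and $-T_\pm$; hence $\sigma_\#(T_\pm) = \sigma_\#(-T_\pm) = -\sigma_\#(T_\pm)$ for $\#\in\{{\rm disc},{\rm ess}\}$, and in particular $\sigma(T_\pm)=-\sigma(T_\pm)$.

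Next, $U_{\rm FW}$ is a unitary operator on $L_2(\oR^2;\oC^2)$, so $T_\pm$ is unitarily equivalent to $(T_\pm)_{\rm FW}$ of (\ref{D_FW}), and therefore $\sigma(T_\pm)=\sigma((T_\pm)_{\rm FW})$ as well as $\sigma_\#(T_\pm)=\sigma_\#((T_\pm)_{\rm FW})$ for $\#\in\{{\rm disc},{\rm ess}\}$. Now $(T_\pm)_{\rm FW}$ is the orthogonal direct sum $\sqrt{D_\pm^* D_\pm}\oplus\bigl(-\sqrt{D_\pm D_\pm^*}\bigr)$ of a non-negative and a non-positive self-adjoint operator. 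For a direct sum $L_1\oplus L_2$ one has $\sigma(L_1\oplus L_2)=\sigma(L_1)\cup\sigma(L_2)$ and $\sigma_{\rm ess}(L_1\oplus L_2)=\sigma_{\rm ess}(L_1)\cup\sigma_{\rm ess}(L_2)$; consequently, on any real interval on which $\sigma(L_1)$ and $\sigma(L_2)$ are disjoint, both $\sigma_{\rm disc}$ and $\sigma_{\rm ess}$ of $L_1\oplus L_2$ coincide with those of whichever summand meets that interval (the eigenspace of the sum there being that of the single relevant summand). Since $\sigma\bigl(\sqrt{D_\pm^* D_\pm}\bigr)\subseteq[0,\infty)$ and $\sigma\bigl(-\sqrt{D_\pm D_\pm^*}\bigr)\subseteq(-\infty,0]$, the two spectra can overlap only at $0$, so on $(0,\infty)$ only the first summand contributes, giving
\[
\sigma_\#(T_\pm)\cap(0,\infty)=\sigma_\#\bigl(\sqrt{D_\pm^* D_\pm}\bigr)\cap(0,\infty),\qquad \#\in\{{\rm disc},{\rm ess}\}.
\]
Finally, $\sqrt{D_\pm^* D_\pm}\geqslant 0$ forces $\sigma_\#\bigl(\sqrt{D_\pm^* D_\pm}\bigr)\subseteq[0,\infty)$, whence $\sigma_\#\bigl(\sqrt{D_\pm^* D_\pm}\bigr)\cap(0,\infty)=\sigma_\#\bigl(\sqrt{D_\pm^* D_\pm}\bigr)\setminus\{0\}$, which is the claimed identity. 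The relation (\ref{D_FW1}) together with the spectral mapping theorem yields in addition $\sigma\bigl(\sqrt{D_\pm^* D_\pm}\bigr)=\sigma\bigl(\sqrt{D_\pm D_\pm^*}\bigr)$ away from $0$, an independent confirmation of the symmetry of $\sigma(T_\pm)$.

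I expect the only delicate point to be the bookkeeping near $0$: one must ensure that $U_{\rm FW}$ really is unitary on the whole of $L_2(\oR^2;\oC^2)$ — which rests on ${\rm sgn}(T_\pm)$ being a partial isometry that is unitary on $\boldsymbol{\mathfrak{Ker}}(T_\pm)^\perp$ and zero on $\boldsymbol{\mathfrak{Ker}}(T_\pm)$, exactly as recalled from Thaller above — and that passing to the direct-sum picture neither creates nor destroys spectrum at points $\lambda>0$. The latter is precisely where the sign information (non-negativity of $\sqrt{D_\pm^* D_\pm}$, non-positivity of $-\sqrt{D_\pm D_\pm^*}$) enters, and it is also why the statement is phrased with $\cap(0,\infty)$ and $\setminus\{0\}$ rather than for the full spectra; the point $0$ itself, and the behaviour of the Foldy-Wouthuysen map on the kernels of $D_\pm$ and $D_\pm^*$, would need a separate analysis. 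Apart from this, everything is a routine application of unitary invariance of the spectrum and of its essential and discrete parts, valid because the hypothesis $\boldsymbol{A}_\pm\in L_2^{\rm loc}(\oR^2;\oR^2)$ with $B_\pm={\rm curl}~\boldsymbol{A}_\pm$ makes $T_\pm$ a well-defined self-adjoint operator to which the preceding constructions apply.
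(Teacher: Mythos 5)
Your proposal is correct and follows essentially the same route as the paper, which likewise reads the result off the Foldy--Wouthuysen diagonalisation (\ref{D_FW}) together with the sign of the two diagonal blocks and the $S$-intertwining (\ref{D_FW1}); the paper merely compresses this into a one-line appeal to the spectral mapping theorem (citing K\"onenberg--Stockmeyer), whereas you spell out the direct-sum bookkeeping for $\sigma_{\rm disc}$ and $\sigma_{\rm ess}$ on $(0,\infty)$. The only small variation is that you obtain the symmetry of the spectrum directly from the anticommutation $\sigma_3 D_{B_\pm}\sigma_3=-D_{B_\pm}$ rather than from the coincidence of $\sigma(D_\pm^*D_\pm)$ and $\sigma(D_\pm D_\pm^*)$ away from zero, and you correctly flag (and correctly set aside, since the statement excludes $0$) the fact that $U_{\rm FW}$ is unitary only on $\boldsymbol{\mathfrak{Ker}}(D_{B_\pm})^\perp$.
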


In the case of the unperturbed magnetic massless Dirac operator, $i\boldsymbol{\sigma} \cdot \boldsymbol{D}_{\boldsymbol{A}^{\rm ext}}$,
we have $B_\pm \equiv eB^{\rm ext}$ and
\begin{align*}
d^* \overset{\rm def.}{=} -2i \partial_z+i \frac{eB^{\rm ext}}{2} \overline{z}
\quad \text{and} \quad
d \overset{\rm def.}{=} -2i \partial_z-i \frac{eB^{\rm ext}}{2} \overline{z}\,\,.
\end{align*}
It is known that $dd^*$ and $d^*d$ are self-adjoint with domains
$\boldsymbol{\mathfrak{Dom}}(dd^*)=\bigl\{\psi \in \boldsymbol{\mathfrak{Dom}}(d^*)
\mid d^*\psi \in \boldsymbol{\mathfrak{Dom}}(d)\bigr\}$ and 
$\boldsymbol{\mathfrak{Dom}}(d^*d)=\bigl\{\psi \in \boldsymbol{\mathfrak{Dom}}(d)
\mid d\psi \in \boldsymbol{\mathfrak{Dom}}(d^*)\bigr\}$. In addition, as above, there is a unitary map $S$
from $\boldsymbol{\mathfrak{Ker}}(dd^*)^\perp$ to $\boldsymbol{\mathfrak{Ker}}(d^*d)^\perp$, such that
$dd^*=S\bigl(d^*d)S^*$. 

The representation of $d^*$ and $d$ via the scalar potential takes the form
\begin{align*}
d^*=-2i e^{\Phi^\circ} \partial_z\,e^{-\Phi^\circ}
\quad \text{and} \quad
d=-2i e^{-\Phi^\circ} \partial_{\overline{z}}\,e^{\Phi^\circ}\,\,,
\end{align*}
where $\Phi^\circ(\overline{z},z)=\frac{eB^{\rm ext}}{4}\overline{z}z$. The equation $d^*d \psi=0$ is equivalent to 
$d \psi=0$, or $\partial_{\overline{z}}\,(e^{\Phi^\circ}\psi)=0$. So the function $\varphi=e^{\Phi^\circ}\psi$ is an entire analytic
function such that $\psi=e^{-\Phi^\circ}\varphi \in L_2$. The space of entire functions with this property is, obviously,
infinite-dimensional, it contains at least all polynomials in $z$. Proceeding in the standard way we can define
the following functions:
\begin{align*}
\psi^{(n)}=(d^*)^n \psi^{(0)}\,\,,
\end{align*}
where $\psi^{(0)}$ obeys $d\psi^{(0)}=0$. It is straightforward to check that $\psi^{(n)}$ obey the
eigenvalue equations
\begin{align*}
d^* d \psi^{(n)}=2neB^{\rm ext} \psi^{(n)}
\quad \text{and} \quad
dd^* \psi^{(n)}=2(n+1)eB^{\rm ext} \psi^{(n)}\,\,.
\end{align*}
In short, the operators $d^*,d$ act between Landau subspaces ${\mathscr L}_n=(d^*)^n {\mathscr L}_0$, $n \in \oN$,
\begin{align*}
d^*:{\mathscr L}_n \mapsto {\mathscr L}_{n+1}\,\,,
\quad d:{\mathscr L}_n \mapsto {\mathscr L}_{n-1}\,\,,
\quad d:{\mathscr L}_0 \mapsto 0\,\,,
\end{align*}
and are, up to constant factors, isometries of Landau subspaces.

Returning to perturbed magnetic massless Dirac operator, we note first that by~\cite[Lemma 1]{Stock} the operator of
multiplication by $|b^{\rm ind}|^{1/2}$ is relatively compact with respect to $\sqrt{\boldsymbol{p}^2+1}$,
$D_\pm D_\pm^*$ and $D_\pm^* D_\pm$. Therefore, as it follows from the relative compactness of the perturbation, by Weyl's
Theorem~\cite[Theorem XIII.14]{RS}, the essential spectrum of the operators
$-i\boldsymbol{\sigma} \cdot \boldsymbol{D}_{\boldsymbol{A}_\pm}$ is invariant under any
compact perturbation and consists of the same Landau levels of the unperturbed magnetic massless
Dirac operator $-i\boldsymbol{\sigma} \cdot \boldsymbol{D}_{\boldsymbol{A}^{\rm ext}}$, {\em i.e.}, the essential spectrum of 
$D_\pm D_\pm^*$ is just the one of $D_\pm^* D_\pm$, shifted by $2eB^{\rm ext}$. Moreover,
due to our previous discussion, we see that $0 \in \sigma_{\rm ess}(d^* d)$. That $0$ is an isolated point of
$\sigma(d^*d)$ follows by noting that, since $0 \notin \sigma_{\rm ess}(dd^*)$, $0$ is neither an accumulation
point of $\sigma(d^* d)$ nor of $0 \in \sigma(dd^*)$. In particular, all this together with Proposition \ref{Prop1}
leads us to the following

\begin{proposition}
Given that $B_\pm=eB^{\rm ext} \pm gb^{\rm ind}$ satisfies $(A1)$ and
$\boldsymbol{A}_\pm \in L_2^{\rm loc}(\oR^2;\oR^2)$, such that $B_\pm={\rm curl}~\boldsymbol{A}_\pm$,
then,
\begin{align*}
\sigma_{\rm ess}(-i\boldsymbol{\sigma} \cdot \boldsymbol{D}_{\boldsymbol{A}_\pm})
=\sigma_{\rm ess}(D_\pm^* D_\pm)
=\sigma_{\rm ess}(-i\boldsymbol{\sigma} \cdot \boldsymbol{D}_{\boldsymbol{A}^{\rm ext}})
=\sigma_{\rm ess}(d^* d)
=\sqrt{2eB^{\rm ext}\left(n-s+1/2\right)}\,\,,
\end{align*}
with $n \in \oN$ and $s=\pm\frac{1}{2}$. For each value of $n$, there are two states with that same energy, the state
with $n$ and $s=1/2$ and the state with $n+1$ and $s=-1/2$. Moreover, $0$ is an isolated point of
$\sigma(-i\boldsymbol{\sigma} \cdot \boldsymbol{D}_{\boldsymbol{A}_\pm})$ and $\sigma(d^*d)$.
\label{Prop2}
\end{proposition}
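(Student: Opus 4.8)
The idea is to reduce the whole statement to the explicitly solvable operator $d^{*}d$ and then read everything back through Proposition~\ref{Prop1}. First I would apply Proposition~\ref{Prop1}: with $\boldsymbol{A}_\pm \in L_2^{\rm loc}(\oR^2;\oR^2)$ and $B_\pm={\rm curl}~\boldsymbol{A}_\pm$ it gives
\begin{align*}
\sigma_{\rm ess}(-i\boldsymbol{\sigma}\cdot\boldsymbol{D}_{\boldsymbol{A}_\pm})\cap(0,\infty)
=\sigma_{\rm ess}\bigl(\sqrt{D_\pm^{*}D_\pm}\bigr)\setminus\{0\}\,\,,
\end{align*}
and since $t\mapsto\sqrt{t}$ is continuous on $[0,\infty)$, the spectral mapping theorem for the continuous functional calculus yields $\sigma_{\rm ess}(\sqrt{D_\pm^{*}D_\pm})=\sqrt{\sigma_{\rm ess}(D_\pm^{*}D_\pm)}$. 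Specialising to $b^{\rm ind}\equiv0$ (so that $B_\pm\equiv eB^{\rm ext}$ and $D_\pm=d$) does the same for $-i\boldsymbol{\sigma}\cdot\boldsymbol{D}_{\boldsymbol{A}^{\rm ext}}$ in terms of $\sqrt{\sigma_{\rm ess}(d^{*}d)}$. So it remains to show $\sigma_{\rm ess}(D_\pm^{*}D_\pm)=\sigma_{\rm ess}(d^{*}d)=\{2neB^{\rm ext}:n\in\oN\}$, to recover the negative part of the spectrum from the symmetry statement of Proposition~\ref{Prop1}, and to handle the value $0$ separately.

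Next I would pin down $\sigma_{\rm ess}(d^{*}d)$ from the Landau eigenfunctions recalled above: $\psi^{(n)}=(d^{*})^{n}\psi^{(0)}$ with $d\psi^{(0)}=0$ solves $d^{*}d\,\psi^{(n)}=2neB^{\rm ext}\psi^{(n)}$, the Landau subspaces ${\mathscr L}_n$ are infinite-dimensional and (by the classical completeness of the Landau basis) exhaust $L_2(\oR^2)$, so $d^{*}d$ has pure point spectrum $\{2neB^{\rm ext}:n\in\oN\}$, each eigenvalue of infinite multiplicity and forming an arithmetic progression; hence $\sigma(d^{*}d)=\sigma_{\rm ess}(d^{*}d)=\{2neB^{\rm ext}:n\in\oN\}$. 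Then I would transport this to the perturbed operator: by \cite[Lemma~1]{Stock} multiplication by $|b^{\rm ind}|^{1/2}$ is relatively compact with respect to $D_\pm^{*}D_\pm$, $D_\pm D_\pm^{*}$ and $\sqrt{\boldsymbol{p}^2+1}$, and the difference between $D_\pm^{*}D_\pm$ and $d^{*}d$ is a lower-order operator built out of $b^{\rm ind}$ and the associated scalar potential $\varphi$ (recall $D_\pm=d\mp\tfrac{i}{2}b^{\rm ind}z$), so the resolvent difference is compact and Weyl's theorem \cite[Theorem~XIII.14]{RS} gives $\sigma_{\rm ess}(D_\pm^{*}D_\pm)=\sigma_{\rm ess}(d^{*}d)$. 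Putting the three steps together with the symmetry of $\sigma(-i\boldsymbol{\sigma}\cdot\boldsymbol{D}_{\boldsymbol{A}_\pm})$ about $0$ gives the displayed chain of equalities; the rewriting with the pseudospin label $s=\pm\tfrac12$ (for which $n-s+\tfrac12$ equals $n$ if $s=\tfrac12$ and $n+1$ if $s=-\tfrac12$) is then the bookkeeping that each level $\sqrt{2keB^{\rm ext}}$ with $k\geqslant1$ is carried by the two states $(n=k,s=\tfrac12)$ and $(n=k-1,s=-\tfrac12)$, while $k=0$ carries only $(n=0,s=\tfrac12)$.

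For the last assertion, that $0$ is isolated in $\sigma(d^{*}d)$ and in $\sigma(-i\boldsymbol{\sigma}\cdot\boldsymbol{D}_{\boldsymbol{A}_\pm})$, I would argue exactly as in the paragraph preceding the statement: $dd^{*}\psi^{(n)}=2(n+1)eB^{\rm ext}\psi^{(n)}$ shows $0\notin\sigma(dd^{*})$, hence (by the same Weyl argument) $0\notin\sigma_{\rm ess}(D_\pm D_\pm^{*})$; since the isometry $S$ (equivalently, the Foldy--Wouthuysen picture) identifies the spectra of $D_\pm^{*}D_\pm$ and $D_\pm D_\pm^{*}$ away from $0$, the point $0$ cannot be an accumulation point of $\sigma(D_\pm^{*}D_\pm)$, so it is isolated in it, and likewise isolated in $\sigma(d^{*}d)$; pulling back through $U_{\rm FW}$ and Proposition~\ref{Prop1} then gives that $0$ is isolated in $\sigma(-i\boldsymbol{\sigma}\cdot\boldsymbol{D}_{\boldsymbol{A}_\pm})$. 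I expect the only real work to be the third step: although the excerpt already announces it, one has to make precise in what sense $D_\pm^{*}D_\pm$ is a relatively compact perturbation of $d^{*}d$ --- these two operators have a priori different form domains, so the comparison must be carried out at the level of resolvents (or quadratic forms), checking that every cross term reduces to an operator covered by \cite[Lemma~1]{Stock}; everything else is functional calculus and the explicit Landau spectrum.
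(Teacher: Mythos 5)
Your proposal is correct and follows essentially the same route as the paper: the explicit Landau eigenfunctions $\psi^{(n)}=(d^*)^n\psi^{(0)}$ to compute $\sigma(d^*d)$, relative compactness of $|b^{\rm ind}|^{1/2}$ from \cite[Lemma 1]{Stock} plus Weyl's theorem to identify $\sigma_{\rm ess}(D_\pm^*D_\pm)$ with $\sigma_{\rm ess}(d^*d)$, the observation that $0\notin\sigma_{\rm ess}(dd^*)$ to isolate $0$, and Proposition \ref{Prop1} with the spectral mapping theorem to lift everything to the Dirac operator. Your closing caveat --- that the relative-compactness comparison between $D_\pm^*D_\pm$ and $d^*d$ is only sketched and must really be carried out at the level of resolvents or quadratic forms --- is a gap the paper itself leaves implicit by deferring to \cite{Stock}.
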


Now we add another perturbation by the $K_0$-potential. Arguing as before for $b^{\rm ind}$, since the operator of
multiplication by $V(x)=-\gamma K_0(\beta |x|)$ satisfies $(A2)$, then $V$ is relatively compact
with respect to $\sqrt{\boldsymbol{p}^2+1}$~\cite[Theorem 4.1]{MOD2} and the operators (\ref{OpBR}) have the same
essential spectra as the respective unperturbed ones. This immediately gives us as a consequence from~\cite[Lemma 1]{Stock}
the following

\begin{proposition}
Given that $V(x)=-\gamma K_0(\beta |x|)$ satisfies $(A2)$, then $V$ is relative
$(-i\boldsymbol{\sigma} \cdot \boldsymbol{D}_{\boldsymbol{A}_\pm})$-compact and
$\sigma_{\rm ess}(-i\boldsymbol{\sigma} \cdot \boldsymbol{D}_{\boldsymbol{A}_\pm})=
\sigma_{\rm ess}(-i\boldsymbol{\sigma} \cdot \boldsymbol{D}_{\boldsymbol{A}_\pm}+V)$.
\label{Prop3}
\end{proposition}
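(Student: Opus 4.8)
\emph{Proof strategy.} The plan is to peel the statement down to a single compactness fact and then quote the standard stability theorem for the essential spectrum. Since $-i\boldsymbol{\sigma}\cdot\boldsymbol{D}_{\boldsymbol{A}_\pm}=D_{B_\pm}$ is self-adjoint, to say that $V$ is relatively $(-i\boldsymbol{\sigma}\cdot\boldsymbol{D}_{\boldsymbol{A}_\pm})$-compact is the same as to say that $V(D_{B_\pm}-i)^{-1}$ is compact on $L_2(\oR^2;\oC^2)$; granting this, Weyl's theorem on the invariance of the essential spectrum under relatively compact perturbations~\cite[Theorem XIII.14]{RS} immediately yields $\sigma_{\rm ess}(-i\boldsymbol{\sigma}\cdot\boldsymbol{D}_{\boldsymbol{A}_\pm}+V)=\sigma_{\rm ess}(-i\boldsymbol{\sigma}\cdot\boldsymbol{D}_{\boldsymbol{A}_\pm})$. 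So the entire task is the compactness of $V(D_{B_\pm}-i)^{-1}$.

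First I would reduce this to the scalar operators $D_\pm^*D_\pm$ and $D_\pm D_\pm^*$ already handled in Proposition~\ref{Prop2}. By~(\ref{D_FW1}) we have $D_{B_\pm}^2=\mathrm{diag}\bigl(D_\pm^*D_\pm,\,D_\pm D_\pm^*\bigr)$, while $V$ acts as the scalar $V(x)\,{\un}_{2\times2}$. Writing $(D_{B_\pm}-i)^{-1}=(D_{B_\pm}^2+1)^{-1/2}\,g(D_{B_\pm})$ with $g(\lambda)=(\lambda^2+1)^{1/2}(\lambda-i)^{-1}$, and observing that $|g|\equiv1$ so that $g(D_{B_\pm})$ is bounded by the functional calculus, one sees that $V(D_{B_\pm}-i)^{-1}$ is compact if and only if $V(D_{B_\pm}^2+1)^{-1/2}=\mathrm{diag}\bigl(V(D_\pm^*D_\pm+1)^{-1/2},\,V(D_\pm D_\pm^*+1)^{-1/2}\bigr)$ is compact, hence if and only if each of $V(D_\pm^*D_\pm+1)^{-1/2}$ and $V(D_\pm D_\pm^*+1)^{-1/2}$ is compact.

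These two compactness statements are exactly what the argument of~\cite[Lemma 1]{Stock} delivers, now fed with $V$ in the role played there by $|b^{\rm ind}|^{1/2}$: the potential $V(x)=-\gamma K_0(\beta|x|)$ lies in $L_p^{\rm loc}(\oR^2)$ for every $p<\infty$ (its only singularity is the logarithmic one of $K_0$ at the origin) and decays exponentially, so it satisfies hypothesis $(A2)$, a decay condition of the same type as the one imposed on $b^{\rm ind}$ in $(A1)$, and it is relatively $\sqrt{\boldsymbol{p}^2+1}$-compact by~\cite[Theorem 4.1]{MOD2}. Running the comparison of~\cite[Lemma 1]{Stock} with this input gives the relative compactness of $V$ with respect to $D_\pm^*D_\pm$ and $D_\pm D_\pm^*$, i.e. the compactness of the two blocks, and hence the Proposition.

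The step that actually needs the machinery of~\cite[Lemma 1]{Stock} — as opposed to a one-line factorization — is the transfer of relative compactness from the free comparison operator $\sqrt{\boldsymbol{p}^2+1}$ to the magnetic operators $D_\pm^*D_\pm$, $D_\pm D_\pm^*$: since $\boldsymbol{A}_\pm$ grows linearly, the magnetic momentum $\boldsymbol{p}-\boldsymbol{A}_\pm$ does not dominate $\boldsymbol{p}$, the operator $(\boldsymbol{p}^2+1)^{1/2}(D_\pm^*D_\pm+1)^{-1/2}$ is unbounded, and one cannot simply insert it between $V(\boldsymbol{p}^2+1)^{-1/2}$ and the rest. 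The route that does work, carried out in~\cite[Lemma 1]{Stock}, is a diamagnetic comparison at the level of the heat semigroups $e^{-tD_\pm^*D_\pm}$ versus $e^{t\Delta}$ — equivalently, Schatten-norm bounds of Kato--Seiler--Simon type — which is insensitive to the unbounded drift; the rest of the proof is bookkeeping.
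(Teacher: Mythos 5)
Your proposal is correct and follows essentially the same route as the paper: the paper's own justification is a one-sentence appeal to the relative compactness of $V$ with respect to $\sqrt{\boldsymbol{p}^2+1}$ from~\cite[Theorem 4.1]{MOD2}, the transfer to the magnetic operators via~\cite[Lemma 1]{Stock}, and Weyl's theorem~\cite[Theorem XIII.14]{RS}. You have merely made explicit the reduction to the block-diagonal scalar operators $D_\pm^*D_\pm$, $D_\pm D_\pm^*$ and the diamagnetic comparison underlying~\cite[Lemma 1]{Stock}, which the paper leaves implicit.
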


\section{Discrete spectrum of the purely magnetic operator}
\label{Sec4}
\hspace*{\parindent}
The next proposition specifies conditions on $gb^{\rm ind}$ under which the operators
$-i\boldsymbol{\sigma} \cdot \boldsymbol{D}_{\boldsymbol{A}_\pm}$ have states in the discrete spectrum.
As our arguments can be considered as an extension of those of K\"onenberg-Stockmeyer~\cite{Stock},
for the sake of completeness, we reproduce the proof of Lemma 3 from~\cite{Stock}, limiting ourselves to
pointing out the main difference between the two proofs.

\begin{proposition}
Define $\lambda_{n,s}=\sqrt{2eB^{\rm ext}\left(n-s+1/2\right)}$, with $n \in \oN$ and $s=\pm\frac{1}{2}$. Assume that
$B_\pm$ satisfies $(A1)$ and let $\boldsymbol{A}_\pm \in L_2^{\rm loc}(\oR^2;\oR^2)$ such that
$B_\pm={\rm curl}~\boldsymbol{A}_\pm$. Then, we have

\,\,\,$(a)$ If $gb^{\rm ind} < 0$ on some open set, then
\begin{align*}
\dim\Bigl(\boldsymbol{\mathfrak{Ran}}\bigl(\chi_{_{(\lambda_{0,s},\lambda_{1,s})}}(-i\boldsymbol{\sigma} \cdot \boldsymbol{D}_{\boldsymbol{A}_\pm})\bigr)\Bigr)
=\dim\Bigl(\boldsymbol{\mathfrak{Ran}}\bigl(\chi_{_{(-\lambda_{1,s},\lambda_{0,s})}}(-i\boldsymbol{\sigma} \cdot \boldsymbol{D}_{\boldsymbol{A}_\pm})\bigr)\Bigr)
=\infty\,\,.
\end{align*}

$(b)$ If $gb^{\rm ind} > 0$, then
\begin{align*}
\dim\Bigl(\boldsymbol{\mathfrak{Ran}}\bigl(\chi_{_{(\lambda_{0,s},\lambda_{1,s})}}(-i\boldsymbol{\sigma} \cdot \boldsymbol{D}_{\boldsymbol{A}_\pm})\bigr)\Bigr)
=\dim\Bigl(\boldsymbol{\mathfrak{Ran}}\bigl(\chi_{_{(-\lambda_{1,s},\lambda_{0,s})}}(-i\boldsymbol{\sigma} \cdot \boldsymbol{D}_{\boldsymbol{A}_\pm})\bigr)\Bigr)
=0\,\,.
\end{align*}
\label{Prop4}
\end{proposition}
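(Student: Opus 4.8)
The plan is to transfer the question, through the Foldy--Wouthuysen picture of Section~\ref{Sec3}, to an eigenvalue count for the scalar operator $Q_\pm:=D_\pm D_\pm^*$, and then to feed into the min--max principle the trial functions supplied by the infinite‑dimensional lowest Landau subspace $\boldsymbol{\mathfrak{Ker}}(D_\pm)$, controlling the relevant quadratic form by means of the commutator identity~\eqref{CAOper1}. I will carry out the case $s=\tfrac{1}{2}$, in which $\lambda_{0,1/2}=0$ and $\lambda_{1,1/2}=\sqrt{2eB^{\rm ext}}$, so that the two intervals in the statement are $(0,\sqrt{2eB^{\rm ext}})$ and $(-\sqrt{2eB^{\rm ext}},0)$; the case $s=-\tfrac{1}{2}$ concerns the gap between the first and second Landau levels and is treated in the same way, one Landau level higher.

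First, the reduction. Since $U_{\rm FW}$ is unitary and $(D_{B_\pm})_{\rm FW}=\mathrm{diag}\bigl(\sqrt{D_\pm^*D_\pm},-\sqrt{D_\pm D_\pm^*}\bigr)$ by~\eqref{D_FW}, the spectral calculus gives, from the upper and lower blocks respectively,
\[
\dim\boldsymbol{\mathfrak{Ran}}\bigl(\chi_{(0,\sqrt{2eB^{\rm ext}})}(-i\boldsymbol{\sigma}\cdot\boldsymbol{D}_{\boldsymbol{A}_\pm})\bigr)=\dim\boldsymbol{\mathfrak{Ran}}\bigl(\chi_{(0,2eB^{\rm ext})}(D_\pm^*D_\pm)\bigr),
\]
\[
\dim\boldsymbol{\mathfrak{Ran}}\bigl(\chi_{(-\sqrt{2eB^{\rm ext}},0)}(-i\boldsymbol{\sigma}\cdot\boldsymbol{D}_{\boldsymbol{A}_\pm})\bigr)=\dim\boldsymbol{\mathfrak{Ran}}\bigl(\chi_{(0,2eB^{\rm ext})}(Q_\pm)\bigr)=\dim\boldsymbol{\mathfrak{Ran}}\bigl(\chi_{(0,2eB^{\rm ext})}(D_\pm^*D_\pm)\bigr),
\]
the last equality being the unitary equivalence of $D_\pm^*D_\pm$ and $Q_\pm$ off their kernels recorded in~\eqref{D_FW1}; note that the infinitely degenerate zeroth Landau level sits at the \emph{open} endpoint $0$ and is thus excluded from both counts. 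By Proposition~\ref{Prop2} we have $\inf\sigma_{\rm ess}(Q_\pm)=2eB^{\rm ext}$ and $0\notin\sigma_{\rm ess}(Q_\pm)$, while $\boldsymbol{\mathfrak{Ker}}(Q_\pm)=\boldsymbol{\mathfrak{Ker}}(D_\pm^*)=\{0\}$ since $D_\pm^*\phi=0$ forces $\phi=e^{\Phi_\pm}\overline{g}$ with $|\phi|^2$ of Gaussian growth, incompatible with $\phi\in L_2$ unless $\phi=0$. Hence every eigenvalue of $Q_\pm$ strictly below $2eB^{\rm ext}$ automatically lies in $(0,2eB^{\rm ext})$, and by min--max the number of them, with multiplicity, equals $\sup\{\dim V:V\subset\boldsymbol{\mathfrak{Dom}}(D_\pm^*),\ \langle\phi,Q_\pm\phi\rangle<2eB^{\rm ext}\|\phi\|^2\ \text{for all }\phi\in V\setminus\{0\}\}$. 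It therefore remains only to estimate this quantity.

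Next, the form identity. By~\eqref{CAOper1} one has, on a form core, $\langle\phi,Q_\pm\phi\rangle=\|D_\pm^*\phi\|^2=\|D_\pm\phi\|^2+2\int_{\oR^2}B_\pm|\phi|^2=\|D_\pm\phi\|^2+2eB^{\rm ext}\|\phi\|^2+2\int_{\oR^2}(\pm gb^{\rm ind})|\phi|^2$, and, since the elements $e^{-\Phi_\pm}f$ of $\boldsymbol{\mathfrak{Ker}}(D_\pm)$ ($f$ entire, $e^{-\Phi_\pm}f\in L_2$) are taken by $D_\pm^*$ back into $L_2$ by the Gaussian weight, $\boldsymbol{\mathfrak{Ker}}(D_\pm)\subset\boldsymbol{\mathfrak{Dom}}(D_\pm^*)$ and there the identity reduces to $\langle\phi,Q_\pm\phi\rangle-2eB^{\rm ext}\|\phi\|^2=2\int_{\oR^2}(\pm gb^{\rm ind})|\phi|^2$. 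In case $(b)$, i.e.\ $\pm gb^{\rm ind}\geq 0$ (equivalently $B_\pm\geq eB^{\rm ext}$), the first display gives $Q_\pm\geq 2eB^{\rm ext}$, so $\sigma(D_\pm^*D_\pm)\subset\{0\}\cup[2eB^{\rm ext},\infty)$ and both projections above vanish. In case $(a)$, i.e.\ $\pm gb^{\rm ind}<0$ on a nonempty open set $\mathcal{O}$ (and, by the profile of Figure~\ref{Induced-b-field}, $\pm gb^{\rm ind}\leq 0$ everywhere), every nonzero $\phi\in\boldsymbol{\mathfrak{Ker}}(D_\pm)$ is $e^{-\Phi_\pm}f$ with $f$ entire and not identically zero, hence $|\phi|^2>0$ a.e., so $\int_{\oR^2}(\pm gb^{\rm ind})|\phi|^2\leq\int_{\mathcal{O}}(\pm gb^{\rm ind})|\phi|^2<0$, i.e.\ $\langle\phi,Q_\pm\phi\rangle<2eB^{\rm ext}\|\phi\|^2$. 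Since $\boldsymbol{\mathfrak{Ker}}(D_\pm)$ is infinite‑dimensional --- it contains $e^{-\Phi_\pm}$ times every polynomial in $z$, exactly as for $\boldsymbol{\mathfrak{Ker}}(d)$ in Section~\ref{Sec3} --- for each $N$ one may choose an $N$‑dimensional $V_N\subset\boldsymbol{\mathfrak{Ker}}(D_\pm)$ on which this strict inequality holds, whence $Q_\pm$ has at least $N$ eigenvalues in $(0,2eB^{\rm ext})$; letting $N\to\infty$ completes the argument.

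Two remarks. The hypotheses in $(a)$ and $(b)$ are to be read --- in keeping with the $\pm$‑convention of the Introduction --- as conditions on the induced perturbation $\pm gb^{\rm ind}$ that actually enters $B_\pm=eB^{\rm ext}\pm gb^{\rm ind}$. If $b^{\rm ind}$ were not sign‑definite, the space $V_N$ in $(a)$ would instead be assembled from sharply concentrated, cut‑off Landau states supported near $\mathcal{O}$, the small resulting $\|D_\pm\phi\|^2$ being outweighed by the negative contribution $2\int_{\mathcal{O}}(\pm gb^{\rm ind})|\phi|^2$; with the profile fixed in~\eqref{Induced-b-fieldA} this refinement is not needed. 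I expect the real work to lie not in the min--max step --- which is soft once $\boldsymbol{\mathfrak{Ker}}(D_\pm)$ is available as a trial space --- but in the operator‑domain bookkeeping: the validity of the commutator identity on the appropriate form core, the inclusions $\boldsymbol{\mathfrak{Ker}}(D_\pm)\subset\boldsymbol{\mathfrak{Dom}}(D_\pm^*)\cap\boldsymbol{\mathfrak{Dom}}(Q_\pm)$, and the exact matching of spectral‑subspace dimensions through $U_{\rm FW}$ and the isometry $S$ of~\eqref{D_FW1}, with the infinitely degenerate zeroth Landau level carefully kept out of the count.
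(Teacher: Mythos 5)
Your argument is correct and follows essentially the same route as the paper's proof: reduce via the Foldy--Wouthuysen diagonalization and the isometry of \eqref{D_FW1} to counting eigenvalues of $D_\pm D_\pm^*$ in $(0,2eB^{\rm ext})$, use the commutator identity \eqref{CAOper1} on the infinite-dimensional trial space $\boldsymbol{\mathfrak{Ker}}(D_\pm)=e^{-\Phi_\pm}\cdot\{\text{entire functions}\}$ together with min--max for part $(a)$, and the operator inequality $D_\pm D_\pm^*=D_\pm^*D_\pm+2B_\pm\geqslant 2eB^{\rm ext}$ for part $(b)$. Your additional bookkeeping (triviality of $\boldsymbol{\mathfrak{Ker}}(D_\pm^*)$, the a.e.\ positivity of $|\phi|^2$ giving strictness, and the reading of the sign hypothesis as a condition on $\pm gb^{\rm ind}$) only makes explicit what the paper leaves implicit.
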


\begin{proof}
Part $(a)$. Let $D(0;R_2)$ be an open disk with
$\supp~b^{\rm ind}=\overline{D(0;R_2)}=\bigl\{x \in \oR^2 \mid |x| \leqslant R_2\bigr\}$.
Recall that there are infinitely many functions $\omega$ analytic in $z$ (at least all polynomials in z), with
$\Psi_\pm=e^{-\Phi_\pm} \omega \in \boldsymbol{\mathfrak{Ker}}(D_\pm^*D_\pm)$. At this point lies the main
difference between the model studied in \cite{Stock} and the model studied in this article. In light of the model
proposed in Refs.~\cite{{WOE,Em}}, for such $\Psi_\pm$, as we are assuming that $b^{\rm ind}$ is a strictly
positive function (see Eq.(\ref{Induced-b-fieldA})), depending on the sign of the chiral charge $g$ (see Table \ref{table1}),
there will be only sublattice ${\oA}$ or sublattice ${\oB}$ states which will contribute to the zero-energy (lowest) Landau
level. Thus, if $gb^{\rm ind}$ is strictly negative, we have, using (\ref{CAOper1}),
\begin{align}\label{KoSt}
\langle \Psi_\pm,D_\pm D_\pm^* \Psi_\pm \rangle
&=2\langle \Psi_\pm,(eB^{\rm ext} \pm gb^{\rm ind}) \Psi_\pm \rangle \nonumber \\[3mm]
&\leqslant 2eB^{\rm ext} \|\Psi_\pm\|^2 + 2g \int_{\overline{D(0;R_2)}} b^{\rm ind}(|x|)
|\Psi_\pm(x)|^2\,\,dx \\[3mm]
&\leqslant 2eB^{\rm ext} \|\Psi_\pm\|^2\,\,. \nonumber
\end{align}
where in the last inequality we use the fact that $\Psi_\pm$ cannot vanish on $D(0;R_2)$. Let $(\Psi_\pm^{(n)})_{n \in \oN}$
be an orthonormal system such that $\Psi_\pm^{(n)} \in \boldsymbol{\mathfrak{Ker}}(D_\pm^*D_\pm)$, namely,
$\Psi_\pm^{(n)}=z^n e^{-\Phi_\pm}$. For $N \in \oN$ define the self-adjoint matrix
\begin{align*}
M_N=\left(\left\langle \Psi_\pm^{(n)},D_\pm D_\pm^* \Psi_\pm^{(m)} \right\rangle \right)_{1 \leqslant n,m \leqslant N}\,\,.
\end{align*}
It follows from (\ref{KoSt}) that $M_N < 2eB^{\rm ext}$. The Rayleigh-Ritz variational principle implies
\begin{align*}
0 \leqslant \lambda_n(D_\pm D_\pm^*) \leqslant \lambda_n(M_n) < 2eB^{\rm ext}\,\,,
\end{align*}
with $n=1,\ldots,N$. For some self-adjoint operator $T$, it is well-known that if $\lambda_1 \leqslant \lambda_2 \leqslant \lambda_3 \cdots$ 
are the eigenvalues of $T$ below the essential spectrum, respectively, the infimum of the essential spectrum,
once there are no more eigenvalues left, then
\begin{align*}
\lambda_n(T)=\sup_{\Psi_1,\ldots,\Psi_{n-1}} \inf_{\Psi_1 \in U(\Psi_1,\ldots,\Psi_{n-1})} \langle \Psi,T \Psi \rangle\,\,,
\end{align*}
where
\begin{align*}
U(\Psi_1,\ldots,\Psi_{n-1})=
\Bigl\{\Psi \in \boldsymbol{\mathfrak{Dom}}(T) \mid \|\Psi\|=1, \Psi \in {\rm span}\{\Psi_1,\ldots,\Psi_{n-1}\}^\perp\Bigr\}\,\,.
\end{align*}
Hence, since $N$ is arbitrary, the mini-max principle implies that
\begin{align*}
\dim\Bigl(\boldsymbol{\mathfrak{Ran}}\bigl(\chi_{_{(\lambda_{0,s},\lambda_{1,s})}}(D_\pm D_\pm^*)\bigr)\Bigr)
=\dim\Bigl(\boldsymbol{\mathfrak{Ran}}\bigl(\chi_{_{(-\lambda_{1,s},\lambda_{0,s})}}(D_\pm D_\pm^*)\bigr)\Bigr)
=\infty\,\,,
\end{align*}
for $0 \notin \sigma_{\rm ess}(D_\pm D_\pm^*)$ by Proposition \ref{Prop2}. The claim is now a consequence of
Proposition \ref{Prop1} and (\ref{D_FW1}).

Part $(b)$. If $gb^{\rm ind}$ is strictly positive, depending on the sign of the magnetic field $b^{\rm ind}$ and of the
chiral charge $g$, we have that $D_\pm D_\pm^* \geqslant 2eB^{\rm ext}$, since
$D_\pm D_\pm^*-D_\pm^* D_\pm=2B_\pm \geqslant 2eB^{\rm ext}$. Thus, again, the claim follows from
Proposition \ref{Prop1} and (\ref{D_FW1}).
\end{proof}

\section{Bounds for the sum of negative eigenvalues}
\label{Sec5}
\hspace*{\parindent}
Of course, the main interesting situation is to consider the massless Dirac operators perturbed by an electric potential
(the $K_0$-potential in our case), corresponding to the interaction between the charge carriers in the conduction band
and its evolution under the action of a magnetic field. As noted in the Introduction, the question of whether or not the
attractive $s$-wave state potential favours $s$-wave massless bipolarons (two-fermion bound states) has been answered
in Ref.~\cite{MOD1}, where for a suitably projected two-dimensional massless Dirac operator in the presence of a
Bessel-Macdonald potential without a magnetic field, it has been proved the absence of bound states if
$\gamma \leqslant \gamma_{\rm crit}$ (the subcritical region where the matter is stable). This is in agreement with the fact
that Weyl-Dirac fermions cannot immediately form bound states by electrostatic potentials. This section considers the
possibility that two charged quasi-particles with an attractive short-ranged potential between them which is not strong
enough to form bound states, may bind in presence of a magnetic field. In other words, we study the energy of quasi-particles
confined to a finite region in the graphene layer via a magnetic field and interacting via the
$K_0$-potential. The possible emergence of two-quasi-particle bound states draws attention to superconductivity in
graphene~\cite{CFFetal,CFDetal,ISTetal}. Thus, the physical applications of graphene and the spirit of universality
encompassed in the original Lieb-Thirring inequality lead us to search for magnetic Lieb-Thirring type inequality on the sum
of negative eigenvalues of the operators $|\boldsymbol{\sigma} \cdot \boldsymbol{p}_{\boldsymbol{A}_\pm}|+V_s$. Here,
$|\boldsymbol{\sigma} \cdot \boldsymbol{p}_{\boldsymbol{A}_\pm}|$ is the so-called {\em massless relativistic Pauli
operator}~\cite{GoFo}, $V_s(x)=-\frac{1}{2\pi} (g^2-e^2) K_0(\beta |x|)$ is the potential of the Bessel-Macdonald type
(associated with the $s$-wave) and $\boldsymbol{p}_{\boldsymbol{A}_\pm}=\boldsymbol{p}-{\boldsymbol{A}_\pm}$,
with $\boldsymbol{p}=-i\boldsymbol{\nabla}$ the two-dimensional moment operator. For brevity, from now on we
shall use the notation $\boldsymbol{P}_{\boldsymbol{A}_\pm}$ for the Dirac operator
$\boldsymbol{\sigma} \cdot \boldsymbol{p}_{\boldsymbol{A}_\pm}$.

\subsection{Magnetic Lieb-Thirring inequality in $\oR^2$}
\label{Sec5a}
\hspace*{\parindent}
We shall not attempt to give an overview of this vast and beautiful subject, since it is much better to refer the
reader to the book of Lieb-Seiringer~\cite{LiSe}. We start recalling that the article by Lieb {\em et al.}~\cite[Theorem 5.1]{LSY}
contains for the Pauli operator, that is, the non-relativistic operator describing the motion of a particle with spin in a
constant magnetic field $B \equiv {\rm const}$, the inequality in $\oR^2$
\begin{align}
\sum_k |\lambda_k((\boldsymbol{\sigma} \cdot (\boldsymbol{p}-\boldsymbol{A}))^2+V)| \leqslant
C_{1} \int_{\oR^2} V_-^2(x)\,\,dx
+C_{2} |B| \int_{\oR^2} V_-(x)\,\,dx\,\,, 
\label{Lieb}
\end{align}
where $\lambda_k$, $k \in \oN$, denotes the negative eigenvalues of the Pauli operator,
enumerated in the non-decreasing order counting multiplicity, while $V_-$ denotes the negative part of $V$.

\begin{remark}
At this point, we recall that any function $f$ can be written as
\begin{align*}
f(x)=f_+(x)-f_-(x)\,\,,
\end{align*}
where the positive part of $f$ is defined by the formula
\begin{align*}
f_+(x)=\max\{f(x),0\}=
\begin{cases}
f(x)\,\,, & \text{if} \quad f(x) > 0 \\[3mm]
0\,\,, & \text{otherwise}
\end{cases}\,\,,
\end{align*}
while the negative part of $f$ is defined by the formula
\begin{align*}
f_-(x)=\max\{-f(x),0\}=-\min\{f(x),0\}=
\begin{cases}
-f(x)\,\,, & \text{if} \quad f(x) < 0 \\[3mm]
0\,\,, & \text{otherwise}
\end{cases}\,\,.
\end{align*}
A peculiarity of terminology is that the ``negative part'' is not really negative. Indeed, with the above convention
$f_+$ and $f_-$ are non-negative functions, that is, $f_+ \geqslant 0$ and $f_- \geqslant 0$; in addition, we have
$|f|=f_+ + f_-$. Hence, $f_\pm=(|f| \pm f)/2$. Naturally, our convention the double subscript ``$\pm$'' must be
understood differently from the one adopted above.
\label{Rema}
\end{remark}

A natural generalization of (\ref{Lieb}) for non-homogeneous magnetic fields was suggested by Erd\H{o}s~\cite{Erd}
(see also Erd\H{o}s-Solovej~\cite{ES}), with $|B|$ replaced by a so-called ``effective'' (scalar) magnetic field ${\mathfrak b}(x)$.
The problem of the effective field observed by Erd\H{o}s was first succesfully addressed by Sobolev~\cite{Sob}, and later
by Bugliaro {\em et al.}~\cite{BFFGS} and Shen~\cite{Sen}. In particular, Sobolev~\cite[Theorem 2.4]{Sob} obtained the
following estimate in $\oR^2$:
\begin{align}
\sum_k |\lambda_k((\boldsymbol{\sigma} \cdot (\boldsymbol{p}-\boldsymbol{A}))^2+V)| \leqslant
C_{1} \int_{\oR^2} V_-^2(x)\,\,dx
+C_{2} \int_{\oR^2} {\mathfrak b}(x)\,V_-(x)\,\,dx\,\,,
\label{Sobolev}
\end{align}
where ${\mathfrak b}(x)$ is the effective magnetic field, with the constants $C_1$ and $C_2$ independents of
$V,B,{\mathfrak b}(x)$.

Regarding the massless relativistic Pauli operator $|\boldsymbol{P}_{\boldsymbol{A}_\pm}|$,
we want to be able to obtain a result concerning the magnetic Lieb-Thirring inequality on the sum of negative
eigenvalues of the operator $|\boldsymbol{P}_{\boldsymbol{A}_\pm}|+V_s$, where $V_s$ is the
potential of the Bessel-Macdonald type (associated with the $s$-wave). Indeed, we now state the main
result of Section \ref{Sec5}:

\begin{theorem}
Denote by $\lambda_1, \lambda_2,\ldots$ the negative eigenvalues $($if any$)$ of the operator
$|\boldsymbol{P}_{\boldsymbol{A}_\pm}|+V_s$ defined on $L_2(\oR^2;\oC^2)$, the space of wave functions
of a single $($pseudo$)$spin-$\frac{1}{2}$ quasi-particle. Given that $B_\pm=eB^{\rm ext} \pm gb^{\rm ind}$,
suppose that $|B^{\rm ext}| \in L_\infty(\oR^2)$ and $|b^{\rm ind}| \in L_p(\oR^2)$ for some $p > 4/3$. Then, for
$V_- \overset{\rm def.}{=} \gamma_s K_0(\beta|x|)$, where $\gamma_s=\frac{1}{2\pi}(g^2-e^2)$ is the coupling constant
for the $s$-wave state, there exist constants $C_1$ and $C_2$, independent of $V_s,B^{\rm ext},b^{\rm ind}$, such that
\begin{align}
\sum_k |\lambda_k^\pm(|\boldsymbol{P}_{\boldsymbol{A}_\pm}|+V_s)| \leqslant
C_{1} \gamma_s^3\,\beta^{-2}+C_{2} \gamma_s\,\beta^{-2}\Bigl(\|eB^{\rm ext}\|_\infty+\|gb^{\rm ind}\|_p^{3p/(3p-4)}\Bigr)\,\,.
\label{Sobolev1}
\end{align}
\label{Sobolev2}
\end{theorem}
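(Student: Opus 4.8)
The strategy is to reduce this first-order inequality to a non-relativistic magnetic Lieb--Thirring bound for the Pauli operator $\boldsymbol{P}_{\boldsymbol{A}_\pm}^2=(\boldsymbol{p}-\boldsymbol{A}_\pm)^2-\sigma_3 B_\pm$ and then to specialise the general bound to the Bessel--Macdonald weight. Since $V_s=-V_-$ is purely attractive there is no positive part to discard, and the operator under study is simply $|\boldsymbol{P}_{\boldsymbol{A}_\pm}|-V_-$. First I would apply the Birman--Schwinger principle: for $\mu>0$ the counting function $N(-\mu):=\#\{k:\lambda_k^\pm<-\mu\}$ equals the number of eigenvalues $\geqslant 1$ of the compact operator $K_\mu:=V_-^{1/2}(|\boldsymbol{P}_{\boldsymbol{A}_\pm}|+\mu)^{-1}V_-^{1/2}$, and then use the layer-cake identity $\sum_k|\lambda_k^\pm|=\int_0^\infty N(-\mu)\,d\mu$.

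The core step is the passage from $|\boldsymbol{P}_{\boldsymbol{A}_\pm}|$ to $\boldsymbol{P}_{\boldsymbol{A}_\pm}^2$, whose zero-energy subspace --- the lowest Landau level --- is an isolated part of the spectrum by Proposition~\ref{Prop2}. Writing, through the subordination formula, $(|\boldsymbol{P}_{\boldsymbol{A}_\pm}|+\mu)^{-1}$ as a positive superposition of the heat semigroups $e^{-t\boldsymbol{P}_{\boldsymbol{A}_\pm}^2}$, and combining a diamagnetic (Kato) inequality for the Pauli semigroup with the spectral gap above the lowest Landau level, one splits the estimate of $N(-\mu)$ into a \emph{bulk} contribution on $\boldsymbol{\mathfrak{Ran}}(\1-P_0)$, where $P_0$ projects onto $\boldsymbol{\mathfrak{Ker}}(\boldsymbol{P}_{\boldsymbol{A}_\pm}^2)$, and a \emph{lowest Landau level} contribution governed by the pointwise zero-mode density $x\mapsto P_0(x,x)$. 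The bulk part is controlled by the free two-dimensional heat kernel, giving $N(-\mu)\lesssim\int(V_--\mu)_+^2\,dx$, which upon integration in $\mu$ yields a term proportional to $\int_{\oR^2}V_-^3\,dx$; following Erd\H{o}s, Sobolev and Shen~\cite{Erd,Sob,Sen}, the density $P_0(x,x)$ is bounded by a constant times an effective magnetic field $\mathfrak{b}(x)$ built from $B_\pm$, and the corresponding part of $\sum_k|\lambda_k^\pm|$ is then at most $C\int_{\oR^2}\mathfrak{b}(x)\,V_-(x)\,dx$. This produces the relativistic magnetic Lieb--Thirring inequality $\sum_k|\lambda_k^\pm|\leqslant C_1\int V_-^3+C_2\int\mathfrak{b}\,V_-$, the analogue for $|\boldsymbol{P}_{\boldsymbol{A}_\pm}|$ of Sobolev's bound~(\ref{Sobolev}); here I would lean on the constructions and arguments of~\cite{GoFo,Sen}.

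It then remains to evaluate the two integrals for $V_-(x)=\gamma_s K_0(\beta|x|)$. The rescaling $y=\beta x$ gives $\int_{\oR^2}V_-^3=\gamma_s^3\,\beta^{-2}\,\|K_0\|_{L^3(\oR^2)}^3$, which is finite because $K_0$ has only a logarithmic singularity at the origin and decays exponentially at infinity; this is the term $C_1\gamma_s^3\beta^{-2}$. For the second term I would bound $\mathfrak{b}(x)\leqslant C\bigl(\|eB^{\rm ext}\|_\infty+\mathfrak{b}_{\rm ind}(x)\bigr)$, where $\mathfrak{b}_{\rm ind}$ is the effective field associated with $gb^{\rm ind}$: the constant piece contributes $\|eB^{\rm ext}\|_\infty\int V_-=\gamma_s\beta^{-2}\|K_0\|_{1}\,\|eB^{\rm ext}\|_\infty$, and for the induced piece one invokes the $L^p$ bounds for the effective field from~\cite{Sen}, together with H\"older's and Young's inequalities, with the exponents matched so that $p>4/3$ is exactly the threshold for which the resulting power is finite, yielding $C\,\gamma_s\beta^{-2}\,\|gb^{\rm ind}\|_p^{3p/(3p-4)}$. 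Collecting the three contributions gives~(\ref{Sobolev1}).

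The step I expect to be the main obstacle is the \emph{lowest Landau level} term: controlling the pointwise zero-mode density $P_0(x,x)$ of $\boldsymbol{P}_{\boldsymbol{A}_\pm}^2$ when the field $B_\pm=eB^{\rm ext}\pm gb^{\rm ind}$ belongs only to $L_\infty+L_p$ rather than being constant --- that is, constructing the effective magnetic field $\mathfrak{b}$ and obtaining the sharp $L^p$ estimate from which the precise exponent $3p/(3p-4)$ emerges --- since the induced perturbation $gb^{\rm ind}$ destroys the explicit diagonalisation available for a homogeneous field. In contrast, the subordination/diamagnetic reduction from $|\boldsymbol{P}_{\boldsymbol{A}_\pm}|$ to $\boldsymbol{P}_{\boldsymbol{A}_\pm}^2$ and the elementary estimates on the $L^q$-norms of $K_0$ are comparatively routine.
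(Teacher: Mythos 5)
Your proposal reaches the same intermediate inequality $\sum_k|\lambda_k^\pm|\leqslant C_1\int_{\oR^2} V_-^3\,dx+C_2\int_{\oR^2}\mathfrak{b}\,V_-\,dx$ and the same final evaluation of the $K_0$-integrals by scaling, but the core reduction is genuinely different from the paper's. You stay with the first-order operator $|\boldsymbol{P}_{\boldsymbol{A}_\pm}|$ throughout, using Birman--Schwinger, the layer-cake identity and subordination to the Pauli heat semigroup, and you split the estimate by the zero-mode projection $P_0$, bounding the lowest-Landau-level contribution through the pointwise density $P_0(x,x)$ in the style of Erd\H{o}s and Sobolev. The paper instead begins with the Birman--Koplienko--Solomyak inequality ${\rm Tr}(A-B)_-\leqslant{\rm Tr}(A^{2}-B^{2})_-^{1/2}$, trading $|\boldsymbol{P}_{\boldsymbol{A}_\pm}|-V_-$ for $(\boldsymbol{P}_{\boldsymbol{A}_\pm}^2-V_-^2)^{1/2}$, and then, following Shen and Bley--Fournais, proves the operator inequality $\boldsymbol{p}_{\boldsymbol{A}_\pm}^2\leqslant C\{\boldsymbol{P}_{\boldsymbol{A}_\pm}^2+\boldsymbol{\varPhi}\}$ by IMS localization on Shen's maximal dyadic squares combined with the Lad\'yzhenskaya inequality and the diamagnetic inequality; the effective field then enters only as an extra scalar potential $\boldsymbol{\varPhi}\approx\mathfrak{b}_p$ fed into Daubechies' CLR-type bound for the magnetic Schr\"odinger operator. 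The paper's route buys a complete bypass of any spectral analysis of $\boldsymbol{P}_{\boldsymbol{A}_\pm}^2$ near zero energy (the diamagnetic inequality is applied only at the level of the magnetic Schr\"odinger Birman--Schwinger kernel); your route keeps the relativistic operator intact and makes the physical origin of the field term --- the zero modes --- transparent, at the cost of the hard pointwise density estimate.

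One step in your sketch needs repair before it can be carried out. For the perturbed field $B_\pm=eB^{\rm ext}\pm gb^{\rm ind}$ the low-lying spectral subspace of $\boldsymbol{P}_{\boldsymbol{A}_\pm}^2$ below the first Landau level is strictly larger than $\boldsymbol{\mathfrak{Ker}}(\boldsymbol{P}_{\boldsymbol{A}_\pm}^2)$: Proposition \ref{Prop4}$(a)$ of this paper shows that when $gb^{\rm ind}<0$ on an open set there are infinitely many eigenvalues of $|\boldsymbol{P}_{\boldsymbol{A}_\pm}|$ strictly between the zeroth and the first Landau level. Consequently the decomposition $\1=P_0+(\1-P_0)$ with the ``bulk'' part handled by the free heat kernel cannot close: the free-kernel bound is only effective at energies above the local field scale, and the complementary low-energy states are not exact zero modes. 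What must be controlled is the spectral projection onto energies below a constant times the effective field (locally), which is precisely the content of the Erd\H{o}s--Sobolev--Shen localization you flag as the main obstacle. With that correction --- projecting onto the low-energy subspace rather than the kernel --- your plan is a viable alternative proof.
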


\begin{remark}
For the power of moments of the negative eigenvalues equal to 1, Theorem \ref{Sobolev2} for the massless
relativistic Pauli operator can be seen as a 2D version similar to Theorem 2.1 for the non-relativistic
Pauli operator in~\cite{Erd}.
\end{remark}

This theorem is proven in the next  subsection. As a strategy to prove it, we shall benefit from the insights
and arguments found in~\cite{GoFo,Sen}. First, let us remember that
$\sum_k |\lambda_k^\pm(|\boldsymbol{P}_{\boldsymbol{A}_\pm}|+V_s)|={\rm Tr}(|\boldsymbol{P}_{\boldsymbol{A}_\pm}|+V_s)$,
then we use the

\begin{theorem}[Lieb-Siedentop-Solovej~\cite{LSS}, Theorem 3, Appendix A]
Let $p \geqslant 1$ and suppose that $A$ and $B$ are two non-negative, self-adjoint linear operators on a
separable Hilbert space such that $(A^{p}-B^{p})_-^{1/p}$ is trace class. Then $(A-B)$ is also trace class and
\begin{align*}
{\rm Tr}(A-B)_- \leqslant {\rm Tr}(A^{p}-B^{p})_-^{1/p}\,\,.
\end{align*}
\end{theorem}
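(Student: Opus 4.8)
The plan is to deduce the displayed inequality ${\rm Tr}(A-B)_- \le {\rm Tr}(A^p-B^p)_-^{1/p}$ from three classical facts: the dual (variational) description of the negative-part trace, the operator monotonicity of $t\mapsto t^{1/p}$ for $p\ge 1$ (L\"owner's theorem), and the subadditivity of the concave trace functional $X\mapsto{\rm Tr}(X^{1/p})$ on the positive cone. Write $q=1/p\in(0,1]$ and set
\[
H:=(A^p-B^p)_-^{1/p}\ge 0, \qquad H^p=(A^p-B^p)_-,
\]
where $A^p-B^p$ is read as a self-adjoint operator as in the hypothesis; then $H$ is trace class by assumption and ${\rm Tr}(A^p-B^p)_-^{1/p}={\rm Tr}\,H$.

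First I would record the single operator inequality that drives the estimate. By the very definition of the negative part,
\[
(A^p-B^p)+H^p=(A^p-B^p)+(A^p-B^p)_-=(A^p-B^p)_+\ge 0,
\]
which is exactly $B^p\le A^p+H^p$ (as a form inequality on the natural domain). Since $p\ge 1$, the root $t\mapsto t^{1/p}$ is operator monotone, so applying it preserves the order,
\[
B=(B^p)^{1/p}\le(A^p+H^p)^{1/p}=:A+T, \qquad T:=(A^p+H^p)^{1/p}-A,
\]
and the same monotonicity applied to $A^p+H^p\ge A^p$ gives $T\ge 0$. Thus I have obtained $B-A\le T$ with $T$ non-negative.

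Next I would invoke the variational formula ${\rm Tr}(A-B)_-=\sup_{0\le\Gamma\le\mathbbm{1}}{\rm Tr}\big[(B-A)\Gamma\big]$, the supremum being over trace-class $\Gamma$ and attained at the spectral projection of $A-B$ onto $(-\infty,0)$. For any admissible $\Gamma$ one has $\Gamma\ge 0$, so $B-A\le T$ yields ${\rm Tr}[(B-A)\Gamma]\le{\rm Tr}[T\Gamma]$; and $T\ge 0$ together with $\Gamma\le\mathbbm{1}$ gives ${\rm Tr}[T\Gamma]\le{\rm Tr}\,T$. Hence ${\rm Tr}(A-B)_-\le{\rm Tr}\,T={\rm Tr}\big[(A^p+H^p)^{1/p}-A\big]$. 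To finish I would apply the strong subadditivity of the trace: for $X,Y\ge 0$ and $0<q\le 1$ the operator $(X+Y)^q-X^q$ is trace class whenever $Y^q$ is, with ${\rm Tr}\big[(X+Y)^q-X^q\big]\le{\rm Tr}(Y^q)$. Taking $X=A^p$ and $Y=H^p$ (so $Y^q=H$) this reads ${\rm Tr}\,T\le{\rm Tr}\,H$, and the chain closes to ${\rm Tr}(A-B)_-\le{\rm Tr}\,H={\rm Tr}(A^p-B^p)_-^{1/p}$; in particular $(A-B)_-$ is trace class, as the application requires.

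The algebraic chain is short, and the step I expect to be the main obstacle is its functional-analytic justification for unbounded $A,B$: giving meaning to $B^p\le A^p+H^p$ and to $T$ on a common form domain, checking that operator monotonicity transfers the order to the (possibly unbounded) $p$-th roots, and above all establishing the trace bound ${\rm Tr}\,T\le{\rm Tr}\,H$ in the strong form ${\rm Tr}[(X+Y)^q-X^q]\le{\rm Tr}(Y^q)$ valid even when ${\rm Tr}(X^q)=\infty$. For bounded operators with ${\rm Tr}(X^q)<\infty$ this is precisely the subadditivity ${\rm Tr}(X+Y)^q\le{\rm Tr}(X^q)+{\rm Tr}(Y^q)$, i.e.\ Rotfel'd's inequality for the concave function $t\mapsto t^q$ with $f(0)=0$. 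The general case is reached via the integral representation
\[
(X+Y)^q-X^q=c_q\int_0^\infty\lambda^{q}\big[(X+\lambda)^{-1}-(X+Y+\lambda)^{-1}\big]\,d\lambda, \qquad c_q=\tfrac{\sin(q\pi)}{\pi},
\]
which exhibits the difference as a convergent integral of non-negative trace-class resolvent differences; pairing it with the corresponding representation of $Y^q$ and passing to the limit through finite-rank truncations of $Y$ by monotone convergence then removes the finiteness provisos.
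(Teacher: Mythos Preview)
The paper does not supply its own proof of this statement; it is quoted verbatim as a tool from Lieb--Siedentop--Solovej (who in turn attribute it to Birman--Koplienko--Solomyak), and is immediately specialized to the form $\alpha=1/p\in(0,1)$ used downstream. There is therefore nothing in the paper to compare your argument against.

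That said, your outline is essentially the standard proof found in those references: from $B^p\le A^p+H^p$ one passes via L\"owner--Heinz to $B-A\le T=(A^p+H^p)^{1/p}-A$, and then bounds ${\rm Tr}\,T$ by the Rotfel'd-type subadditivity ${\rm Tr}\big[(X+Y)^{q}-X^{q}\big]\le{\rm Tr}(Y^{q})$ for $0<q\le 1$. Your identification of the genuine technical content---making the form inequalities rigorous for unbounded $A,B$ and proving the trace subadditivity without assuming ${\rm Tr}(X^{q})<\infty$, via the resolvent integral representation and monotone approximation---is accurate and matches what the cited works actually carry out. One small remark: the paper's phrasing ``$(A-B)$ is also trace class'' is imprecise (for unbounded $A,B$ this cannot hold); your reading, that $(A-B)_-$ is trace class, is the intended and correct conclusion.
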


Originally, this inequality is due to Birman-Koplienko-Solomyak~\cite{BKS}, and for this reason it is known as
{\em BKS inequalities}. Here, we will take into account the form that is relevant to us, namely (see~\cite{GoFo})
\begin{align}
{\rm Tr}(A-B)_- \leqslant {\rm Tr}(A^{1/\alpha}-B^{1/\alpha})_-^\alpha\,\,,
\label{BKSinq}
\end{align}
for any positive self-adjoint operators $A$ and $B$ and $0 < \alpha < 1$. As an illustration of the usefulness of the
trace estimate (\ref{BKSinq}), for $s=1/2$, let us to effectively replace $|\boldsymbol{P}_{\boldsymbol{A}_\pm}|$
by $\boldsymbol{P}_{\boldsymbol{A}_\pm}^2=\boldsymbol{p}_{\boldsymbol{A}_\pm}^2-B_\pm$.
This will allow us then adapt arguments found in~\cite{Sen} in order to use the non-relativistic Lieb-Thirring inequality
(for the magnetic momentum $\boldsymbol{p}_{\boldsymbol{A}_\pm}$) to obtain lower bounds for the sum of the negative
eigenvalues of the operator $|\boldsymbol{P}_{\boldsymbol{A}_\pm}|+V_s$.

To reach these lower bounds, in the second step, we will compare the massless Pauli operator
$\boldsymbol{P}_{\boldsymbol{A}_\pm}^2$ with the magnetic Schr\"odinger operator $\boldsymbol{p}_{\boldsymbol{A}_\pm}^2$
on an appropriate scale, as in~\cite{Sen}. To this end, we shall localize the operators to squares $S$ over which the $L_p$
average of $|B_\pm|$ is small compared to $\ell(S)^{-2/3}$ (the localization error in the kinetic energy). More precisely, we
divide $\oR^2$ into a grid of disjoint squares $\{S_j\}$ where each $S_j$ is a maximal dyadic square such that
\begin{align}
\left(\int_{12S(x,\ell)} |B_\pm(y)|^p\,\,dy\right)^{1/p} \leqslant \frac{\varepsilon}{[\ell(S)]^{(3p-4)/2p}}\,\,,
\label{smallE}
\end{align}
where $\ell(S_j)=\ell_j$ is the side length of $S_j$, and $12 \ell_j$ denotes the square which has the same center as
$S_j$ and side length $12\ell(S_j)$. It can be proved that $\ell_j \approx \ell_k$ if $4S_j \cap 4S_k \not= \varnothing$.
Then, using this property, one constructs a partition of unity for $\oR^2$: $\sum_j \phi_j \equiv 1$, with 
$\phi_j \in C_0(2S_j;\oR)$. This leads us to the following IMS-type localization formula, which in the non-relativistic case
says that for any $\Psi_\pm$ and $\boldsymbol{A}_\pm$
\begin{align}
\int_{\oR^2} \bigl[(\boldsymbol{p}-{\boldsymbol{A}_\pm})\Psi_\pm\bigr]^2\,\,dx
=\sum_j \int_{\oR^2} \bigl[(\boldsymbol{p}-{\boldsymbol{A}_\pm})(\phi_j \Psi_\pm)\bigr]^2\,\,dx
+\sum_j \int_{\oR^2} |\boldsymbol{\nabla}\phi_j|^2\,|\Psi_\pm|^2\,\,dx\,\,.
\label{IMSform}
\end{align}
In this case, the localization error $\sum_j |\boldsymbol{\nabla}\phi_j|^2$ is such that
$|\boldsymbol{\nabla}^\alpha \phi_j| \leqslant C_\alpha/\ell_j^{|\alpha|}$, being local and independent of ${\boldsymbol{A}_\pm}$.
With this localization formula, we will show that if $\varepsilon$ in (\ref{smallE}) is sufficiently small, then
\begin{align*}
\boldsymbol{p}_{\boldsymbol{A}_\pm}^2
\leqslant C\bigl\{\boldsymbol{P}_{\boldsymbol{A}_\pm}^2+{\boldsymbol{\varPhi}}\bigr\}\,\,,
\end{align*}
where ${\boldsymbol{\varPhi}}=\sum_j \phi_j^2/\ell_j^{3/2}$, such that $\boldsymbol{\varPhi}(x) \approx {\mathfrak b}_p(x)$
for $p > 4/3$. Here, as explained below, ${\mathfrak b}_p(x)$ is an effective (scalar) magnetic field defined to be the
$L_p$ average of $|B_\pm|$ over a suitable square centered at $x$ with a side length scaling like $|B_\pm|^{-2/3}$.

\subsection{Proof of Theorem \ref{Sobolev2}}
\label{Sec5c}
\hspace*{\parindent}
Compared to the articles by Sobolev~\cite{Sob} and Bugliaro {\em et al.}~\cite{BFFGS}, Shen~\cite{Sen} found a
simpler and more natural way to define the effective field. In $\oR^3$, Shen's idea was to replace ${\mathfrak b}(x)$
with ${\mathfrak b}_p(x)$, where the last is defined to be the $L_p$ average of $|B|$ over a suitable cube centered
at $x$ with a side length scaling like $|B|^{-1/2}$. Based on work by Shen~\cite{Sen}, our next goal is to provide lower
bounds for the sum of the negative eigenvalues of the two-dimensional massless Dirac operator with the magnetic
fields $B_\pm=eB^{\rm ext} \pm gb^{\rm ind}$, where $B^{\rm ext} > 0$ is a homogeneous field and $b^{\rm ind}$ a
non-homogeneous magnetic field, perturbed by the $K_0$-potential. With this in mind, since $|B_\pm|$ scales like
({\em length})$^{-3/2}$ in $\oR^2$, a simple dimension counting shows that ${\mathfrak b}_p(x)$ must be defined to
be the $L_p$ average of $|B_\pm|$ over a suitable square centered at $x$ with a side length scaling like
$|B_\pm|^{-2/3}$. 

\begin{remark}
At this point, a comment on the power $-2/3$ in $|B_\pm|^{-2/3}$ is in order. Since we are assuming that the field
$B^{\rm ext}$ points perpendicularly to the graphene sheet plane, that is, it points along the $x_3$-axis (which is always true
for two dimensions), we should have $\boldsymbol{B}^{\rm ext}=(0,0,B^{\rm ext})$. This implies that dimensionally $|B^{\rm ext}|$
scales like ({\em length})$^{-2}$, while $|b^{\rm ind}|$ scales like ({\em length})$^{-3/2}$, once $b^{\rm ind}$ is the induced
field within the bulk of the system. This would be so if the charge carriers in graphene were described by massless
Dirac fermions constrained to move on a two-dimensional (2D) manifold embedded in three-dimensional (3D) space.
This apparent discrepancy in the scaling is remedied by remembering that graphene is a {\em strictly two-dimensional
material}~\cite{Novo}, and it is the interaction between the quasi-particles that should determine the scaling of the field
$|B^{\rm ext}|$. In other words, the quasi-particles are no more able to perceive the third dimension than {\em Flatland's Square}.
Because of this, $|B^{\rm ext}|$ should scale like ({\em length})$^{-3/2}$. We call this the {\bf physical dimension},
that is, the dimension determined by the interaction between the quasi-particles.
\label{Wado}
\end{remark}

Taking Remark \ref{Wado} and Shen's work into account, a basic length scale can be defined as
\begin{align}
\ell_p(x)=\sup\left\{\ell > 0 \mid \ell^{3/2}
\left(\frac{1}{\ell^2}\int_{S(x,\ell)} |B_\pm(y)|^p\,\,dy\right)^{1/p} \leqslant 1\right\}\,\,,
\label{Shen1}
\end{align}
where $S(x,\ell)$ denotes the a square in $\oR^2$ centered at $x$ with side length $\ell$. Note that the Eq.(\ref{Shen1})
implies that
\begin{align*}
0 \leqslant \left(\int_{S(x,\ell)} |B_\pm(y)|^p\,\,dy\right)^{1/p} \leqslant \frac{1}{\ell^{(3p-4)/2p}}\,\,.
\end{align*}
Assuming that $p > 4/3$ and taking the limit $\ell \to \infty$ we found that $0 \leqslant \|B_\pm\|_p \leqslant 0$,
implying that $B_\pm$ should be identically zero. Hence, from now on we will assume that $B_\pm \not\equiv 0$
and $B_\pm \in L_p^{\rm loc}(\oR^2)$ for some $p > 4/3$. With this we have $0 < \ell_p(x) < \infty$ for any
$x \in \oR^2$. Thus, according to Shen~\cite{Sen}, our effective field is given by
\begin{align}
{\mathfrak b}_p(x)=\frac{1}{(\ell_p(x))^{3/2}}
=\left(\frac{1}{\ell_p^2(x)}\int_{S(x,\ell)} |B_\pm(y)|^p\,\,dy\right)^{1/p}\,\,.
\label{Shen1A}
\end{align}

\begin{proposition}
Given that $B_\pm=eB^{\rm ext} \pm gb^{\rm ind}$, suppose that $|B^{\rm ext}| \in L_\infty(\oR^2)$ and
$|b^{\rm ind}| \in L_p(\oR^2)$ for some $p > 4/3$. Let $\ell=\ell_p(x)$; then
\begin{align}
{\mathfrak b}_p(x)=\frac{1}{(\ell_p(x))^{3/2}} \leqslant C \Bigl(\|eB^{\rm ext}\|_\infty+\|gb^{\rm ind}\|_p^{3p/(3p-4)}\Bigr)\,\,.
\label{Shen2}
\end{align}
\label{Shen2A}
\end{proposition}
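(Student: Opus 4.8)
The plan is to estimate the length scale $\ell_p(x)$ defined in (\ref{Shen1}) from below and then invert, since ${\mathfrak b}_p(x)=\ell_p(x)^{-3/2}$ by (\ref{Shen1A}). Because $\ell_p(x)$ is a supremum, it suffices to exhibit a single $\ell_0>0$ lying in the defining set, i.e. one for which
\[
\ell_0^{3/2}\left(\frac{1}{\ell_0^2}\int_{S(x,\ell_0)}|B_\pm(y)|^p\,dy\right)^{1/p}\leqslant 1 ;
\]
once such $\ell_0$ is found we get $\ell_p(x)\geqslant\ell_0$, hence ${\mathfrak b}_p(x)\leqslant\ell_0^{-3/2}$.

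First I would bound the integral uniformly over the square. From $|B_\pm|\leqslant\|eB^{\rm ext}\|_\infty+|gb^{\rm ind}|$ pointwise, the subadditivity $(u+v)^{1/p}\leqslant u^{1/p}+v^{1/p}$ (valid for $p\geqslant 1$), and $\|\,\cdot\,\|_{L_p(S(x,\ell))}\leqslant\|\,\cdot\,\|_{L_p(\oR^2)}$, one gets
\[
\left(\int_{S(x,\ell)}|B_\pm(y)|^p\,dy\right)^{1/p}\leqslant C\Bigl(\|eB^{\rm ext}\|_\infty\,\ell^{2/p}+\|gb^{\rm ind}\|_p\Bigr),
\]
so that the left-hand side of the displayed condition is controlled by $C\bigl(\|eB^{\rm ext}\|_\infty\,\ell^{3/2}+\|gb^{\rm ind}\|_p\,\ell^{(3p-4)/2p}\bigr)$. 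The exponent $(3p-4)/2p$ is strictly positive exactly because $p>4/3$, so both terms increase in $\ell$ and vanish as $\ell\to 0$; this is the only place where the hypothesis $p>4/3$ is really used (it is also what guarantees $0<\ell_p(x)<\infty$ and legitimises (\ref{Shen1A}), as noted after (\ref{Shen1})).

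It remains to choose $\ell_0$. Requiring each of the two terms to be at most $1/2$ gives the constraints $\ell_0\leqslant c\,\|eB^{\rm ext}\|_\infty^{-2/3}$ and $\ell_0\leqslant c\,\|gb^{\rm ind}\|_p^{-2p/(3p-4)}$, and I would take $\ell_0$ to be the smaller of the two (a vanishing field contributing $+\infty$, which is harmless since $B_\pm\not\equiv 0$). Raising both constraints to the power $-3/2$ and using $\max\{a,b\}\leqslant a+b$ then yields
\[
{\mathfrak b}_p(x)\leqslant\ell_0^{-3/2}\leqslant C\Bigl(\|eB^{\rm ext}\|_\infty+\|gb^{\rm ind}\|_p^{3p/(3p-4)}\Bigr),
\]
which is exactly (\ref{Shen2}). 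The whole argument is in effect dimensional analysis, so there is no serious obstacle; the points needing a little care are the bookkeeping of the exponents and the harmless treatment of the case in which one of the two fields is absent.
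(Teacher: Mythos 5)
Your argument is correct, and it reaches the bound by a slightly different mechanism than the paper. The paper works \emph{at} the critical scale: it argues by continuity of $h(\ell)=\ell^{3/2}\bigl(\ell^{-2}\int_{S(x,\ell)}|B_\pm|^p\,dy\bigr)^{1/p}$ that $h(\ell_p(x))=1$, then expands that identity with Minkowski's inequality and solves for $\ell_p(x)^{-3/2}$; you instead exhibit a sub-critical test length $\ell_0$ with $h(\ell_0)\leqslant 1$ and use only the trivial monotonicity $\ell_p(x)\geqslant\ell_0$. The decomposition of the field is the same in both proofs (split $|B_\pm|\leqslant\|eB^{\rm ext}\|_\infty+|gb^{\rm ind}|$, pick up $\ell^{2/p}$ from the area of the square for the bounded part and $\|gb^{\rm ind}\|_p$ for the $L_p$ part, and track the exponent $(3p-4)/2p>0$, which is where $p>4/3$ enters in both versions). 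Your route buys two things: it avoids having to justify $h(\ell_p(x))=1$ (continuity of $h$ plus the fact that the sup is attained in the defining set), and — more substantively — it produces a constant $C$ that is manifestly independent of $x$. In the paper's version the intermediate constant $C'=\bigl(\ell_p(x)\,\|gb^{\rm ind}\|_p^{2p/(3p-4)}\bigr)^{-2/p}$ still depends on $\ell_p(x)$, so taking $C=\max\{1,C'\}$ does not by itself give a uniform bound without a further a priori lower bound on $\ell_p(x)$; your test-length argument sidesteps this entirely. One cosmetic remark: the step you label ``subadditivity of $t\mapsto t^{1/p}$'' is really Minkowski's inequality in $L_p(S(x,\ell))$ (or, equivalently, $(f+g)^p\leqslant 2^{p-1}(f^p+g^p)$ followed by subadditivity), but since you allow an unspecified constant $C$ either phrasing yields the stated bound.
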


\begin{proof}
We start by defining the function $h=h(\ell)$, with $\ell > 0$, given by
\begin{align*}
h(\ell)=\ell^{3/2} \left(\frac{1}{\ell^2}\int_{S(x,\ell)} |B_\pm(y)|^p\,\,dy\right)^{1/p}\,\,.
\end{align*}
Note that $h$ is continuous if $p > 4/3$. Hence, if $\ell > 0$, it follows that $h(\ell) < 1$. Then, for $\varepsilon > 0$
arbitrarily small, it is also true that $h(\ell+\varepsilon) < 1$. Thus, if $\ell_p(x)$ is the $\sup$ of such $\ell > 0$ for which
$h(\ell) \leqslant 1$, by Eq.(\ref{Shen1}) we must have $h(\ell_p(x))=1$, {\em i.e.}, we have
\begin{align*}
1&=[\ell_p(x)]^{3/2} \left(\frac{1}{\ell_p^2(x)}\int_{S(x,\ell_p(x))} |B_\pm(y)|^p\,\,dy\right)^{1/p} \\[3mm]
&\leqslant [\ell_p(x)]^{(3p-4)/2p} \left\{\left(\int_{S(x,\ell_p(x))} |eB^{\rm ext}(y)|^p\,\,dy\right)^{1/p}
+\left(\int_{S(x,\ell_p(x))} |gb^{\rm ind}(y)|^p\,\,dy\right)^{1/p}\right\} \\[3mm]
&\leqslant [\ell_p(x)]^{(3p-4)/2p} \Bigl([\ell(x)]^{2/p} \|eB^{\rm ext}\|_\infty+\|gb^{\rm ind}\|_p \Bigr) \\[3mm]
&=[\ell_p(x)]^{3/2} \Bigl(\|eB^{\rm ext}\|_\infty+C'\|gb^{\rm ind}\|_p^{3p/(3p-4)}\Bigr)\,\,,
\end{align*}
where we used Minkowski in the first inequality and define
\begin{align*}
C'=\left(\frac{1}{\ell_p(x) \|gb^{\rm ind}\|_p^{2p/(3p-4)}}\right)^{2/p}\,\,.
\end{align*}
Finally, Eq.(\ref{Shen2}) then follows by taking $C=\max\,\{1,C'\}$.
\end{proof}

Following the script of Shen~\cite{Sen}, next, we shall sketch the construction of the partition of unity for $\oR^2$
associated with the function $\ell_p(x)$. First, we define ${\mathscr A}$ the set of all dyadic squares in $\oR^2$
such that
\begin{align}
\left(\int_{12S(x,\ell)} |B_\pm(y)|^p\,\,dy\right)^{1/p} \leqslant \frac{\varepsilon}{[\ell(S)]^{(3p-4)/2p}}\,\,,
\label{ShenEp}
\end{align}
where $\varepsilon \in (0,1)$ is a constant to be determined later. Here, $\ell(S)$ denotes the side length of $S$. 
It is said that $S$ is a {\em maximal element} of ${\mathscr A}$ if $S \in {\mathscr A}$ and $S$ is not properly
contained in any other square in ${\mathscr A}$. Let ${\mathscr B}$ denote the set of all maximal elements of
${\mathscr A}$. By definition, the interiors of the squares in ${\mathscr B}$ are disjoint.
Let ${\mathscr B}=\{S_j\}_{j=1}^\infty$; then (see~\cite[Lemma 3.1]{Sen})
\begin{align*}
\oR^2=\bigcup_j S_j\,\,.
\end{align*}
In the sequel, using the same argument as in~\cite[Lemma 3.2]{Sen} one shows that
\begin{align}
\frac{1}{2} \ell(S_k) \leqslant \ell(S_j) \leqslant 2 \ell(S_k)\,\,, 
\quad \text{if} \quad 4S_j \cap 4S_k \not= \varnothing\,\,.
\label{Shen3} 
\end{align}

It follows from (\ref{Shen3}) the

\begin{lemma}[Shen~\cite{Sen}, p.318]
There exists a sequence of functions $(\phi_j)_{j \in \oN}$ such that
\begin{enumerate}
\item $\phi_j \in C_0^\infty(2S_j;\oR)$ and $0 \leqslant \phi_j \leqslant 1$;

\item $|\boldsymbol{\nabla}^\alpha\phi_j| \leqslant C_\alpha/\ell_j^{|\alpha|}$, where $\ell_j=\ell(S_j)$;

\item $\sum_j \phi_j^2 \equiv 1$ in $\oR^2$.
\end{enumerate}
\label{LemmaShen}
\end{lemma}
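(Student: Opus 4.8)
Here is how I would prove Lemma \ref{LemmaShen}. The plan is the standard smooth-partition-of-unity construction: first produce un-normalized bumps adapted to the squares $S_j$, then normalize in $\ell^2$. Fix once and for all a function $\theta \in C_0^\infty(\oR^2;\oR)$ with $0 \leqslant \theta \leqslant 1$, $\theta \equiv 1$ on the unit square $S(0;1)$, and $\supp~\theta$ a compact subset of the open square $S(0;2)$. Writing $c_j$ for the center of $S_j$ and $\ell_j=\ell(S_j)$, set $\psi_j(x)=\theta\bigl((x-c_j)/\ell_j\bigr)$. By construction $\psi_j \in C_0^\infty(2S_j;\oR)$, $0 \leqslant \psi_j \leqslant 1$, $\psi_j \equiv 1$ on $S_j$, and, since every derivative of $\theta$ is bounded, the chain rule gives $|\boldsymbol{\nabla}^\alpha\psi_j| \leqslant C_\alpha/\ell_j^{|\alpha|}$. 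Because $\oR^2=\bigcup_j S_j$ and $\psi_j \equiv 1$ on $S_j$, the sum $\Psi \overset{\rm def.}{=}\sum_j \psi_j$ satisfies $\Psi \geqslant 1$ everywhere.

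Next I would record a bounded-overlap property from \eqref{Shen3}. If $x \in 2S_j \cap 2S_k$ then in particular $4S_j \cap 4S_k \not=\varnothing$, so \eqref{Shen3} yields $\frac{1}{2}\ell_k \leqslant \ell_j \leqslant 2\ell_k$; consequently all squares $2S_k$ meeting a fixed $2S_j$ have side length comparable to $\ell_j$ and lie in a fixed dilate of $S_j$, and since the $S_k$ have disjoint interiors an area count bounds their number by an absolute constant $M$. Hence at each point at most $M$ of the $\psi_k$ are non-zero, so $1 \leqslant \Psi \leqslant M$; moreover by Cauchy--Schwarz $1 \leqslant \Psi \leqslant \sqrt{M}\,\bigl(\sum_k \psi_k^2\bigr)^{1/2}$, so the (locally finite, hence smooth) function $\Theta \overset{\rm def.}{=}\bigl(\sum_k \psi_k^2\bigr)^{1/2}$ obeys $M^{-1/2} \leqslant \Theta \leqslant M^{1/2}$ on all of $\oR^2$.

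Then I set $\phi_j \overset{\rm def.}{=}\psi_j/\Theta$. Property 3 is immediate: $\sum_j \phi_j^2=\Theta^{-2}\sum_j \psi_j^2 \equiv 1$. Property 1 holds because $\supp~\phi_j \subseteq \supp~\psi_j \subset 2S_j$, because $\Theta$ is smooth and bounded below so $\phi_j \in C_0^\infty$, and because $\psi_j^2 \leqslant \sum_k \psi_k^2=\Theta^2$ gives $0 \leqslant \phi_j = \psi_j/\Theta \leqslant 1$. For property 2 I would apply the Leibniz rule, $\boldsymbol{\nabla}^\alpha\phi_j=\sum_{\beta\leqslant\alpha}\binom{\alpha}{\beta}\boldsymbol{\nabla}^\beta\psi_j\,\boldsymbol{\nabla}^{\alpha-\beta}(1/\Theta)$, and note that on $\supp~\psi_j$ only those $\psi_k$ with $2S_k\cap 2S_j\not=\varnothing$ contribute to $\Theta$; by the previous paragraph these are at most $M$ in number and all have $\ell_k$ comparable to $\ell_j$. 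Combining this with $\Theta \geqslant M^{-1/2}$, with the bounds $|\boldsymbol{\nabla}^\beta\psi_k|\leqslant C_\beta/\ell_k^{|\beta|}\leqslant C_\beta'/\ell_j^{|\beta|}$, and with the chain-rule expansions of $\boldsymbol{\nabla}^\gamma(1/\Theta)$ and $\boldsymbol{\nabla}^\delta\Theta$ in terms of the (bounded) quantity $1/\Theta$ and of $\boldsymbol{\nabla}^{\delta'}(\sum_k\psi_k^2)$, one gets $|\boldsymbol{\nabla}^\alpha\Theta|\leqslant C_\alpha/\ell_j^{|\alpha|}$ on $\supp~\psi_j$ and hence $|\boldsymbol{\nabla}^\alpha\phi_j|\leqslant C_\alpha/\ell_j^{|\alpha|}$.

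The only genuinely delicate point is the bounded-overlap/comparability step: the positivity of $\Theta$, the normalization, and above all the derivative estimates all rest on \eqref{Shen3} guaranteeing that every square interacting with a given $S_j$ has side length within a factor $2$ of $\ell_j$, so that arbitrarily small squares cannot accumulate inside $2S_j$ and spoil the uniform bounds. Once that is in hand the remainder is routine.
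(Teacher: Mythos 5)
Your construction is correct, and it is exactly the standard argument that the paper itself does not spell out but delegates to Shen~\cite{Sen}: rescaled bump functions $\psi_j$ supported in $2S_j$ and identically $1$ on $S_j$, the bounded-overlap and side-length-comparability consequences of \eqref{Shen3} to get $0<M^{-1/2}\leqslant\bigl(\sum_k\psi_k^2\bigr)^{1/2}\leqslant M^{1/2}$ with controlled derivatives, and then the $\ell^2$-normalization $\phi_j=\psi_j/\bigl(\sum_k\psi_k^2\bigr)^{1/2}$. You have also correctly isolated the one genuinely load-bearing ingredient, namely that \eqref{Shen3} prevents arbitrarily small squares from accumulating inside $2S_j$, so there is nothing to add.
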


Recall that $\ell_j=\ell(S_j)$ and $S_j$ is a maximal square. Define
\begin{align*}
\boldsymbol{\varPhi}(x)=\sum_j \frac{1}{\ell_j^{3/2}}\phi_j^2(x)\,\,.
\end{align*}

The next result is a 2D version of Shen's Theorems 3.1 and 5.1~\cite{Sen}.

\begin{theorem}
There exist constants $C > 0$ and $\varepsilon_0 > 0$ such that, if $0< \varepsilon < \varepsilon_0$, 
we have
\begin{align*}
\boldsymbol{p}_{\boldsymbol{A}_\pm}^2
\leqslant C\bigl\{\boldsymbol{P}_{\boldsymbol{A}_\pm}^2+{\boldsymbol{\varPhi}}\bigr\}\,\,.
\end{align*}
\label{ShenTheo}
\end{theorem}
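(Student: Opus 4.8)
The plan is to prove the operator inequality locally on each square $S_j$ of the partition produced by Lemma~\ref{LemmaShen}, and then to patch the local estimates together using the quadratic partition of unity $\sum_j\phi_j^2\equiv 1$. The starting point is the pointwise identity $\boldsymbol{P}_{\boldsymbol{A}_\pm}^2=\boldsymbol{p}_{\boldsymbol{A}_\pm}^2-B_\pm$ (the magnetic Pauli relation in two dimensions), from which the IMS-type localization formula~(\ref{IMSform}), in its quadratic-form version, gives
\begin{align*}
\langle\Psi_\pm,\boldsymbol{p}_{\boldsymbol{A}_\pm}^2\Psi_\pm\rangle
=\sum_j\langle\phi_j\Psi_\pm,\boldsymbol{p}_{\boldsymbol{A}_\pm}^2\,\phi_j\Psi_\pm\rangle
-\sum_j\int_{\oR^2}|\boldsymbol{\nabla}\phi_j|^2\,|\Psi_\pm|^2\,dx\,\,,
\end{align*}
so it suffices to control each term $\langle\phi_j\Psi_\pm,\boldsymbol{p}_{\boldsymbol{A}_\pm}^2\,\phi_j\Psi_\pm\rangle$ from above by $C\langle\phi_j\Psi_\pm,\boldsymbol{P}_{\boldsymbol{A}_\pm}^2\,\phi_j\Psi_\pm\rangle$ plus an error of order $\ell_j^{-3/2}\|\phi_j\Psi_\pm\|^2$, since the term $|\boldsymbol{\nabla}\phi_j|^2\le C\ell_j^{-2}\le C\ell_j^{-3/2}$ (for $\ell_j$ bounded; the unbounded case is handled by the definition of $\ell_p$) is absorbed into $\boldsymbol{\varPhi}$.

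For the local estimate on a fixed square $S_j$ I would argue as follows. On $2S_j$ write $\boldsymbol{p}_{\boldsymbol{A}_\pm}^2=\boldsymbol{P}_{\boldsymbol{A}_\pm}^2+B_\pm$, so the task is to bound $\langle u,B_\pm u\rangle$ for $u=\phi_j\Psi_\pm$ supported in $2S_j$. Using the defining property~(\ref{ShenEp}) of the maximal square, $\|B_\pm\|_{L_p(12S_j)}\le\varepsilon\,\ell_j^{-(3p-4)/2p}$, together with a Sobolev / Hölder interpolation inequality on the square (diamagnetic inequality to pass from $\boldsymbol{p}_{\boldsymbol{A}_\pm}$ to $\boldsymbol{p}$, then the two-dimensional Gagliardo–Nirenberg–Sobolev embedding $\|u\|_{L_q}^2\le C\|u\|_{L_2}^{2-\theta}\|\nabla|u|\|_{L_2}^{\theta}$ with the exponent $q=2p/(p-1)$ dual to $p$), one gets
\begin{align*}
\bigl|\langle u,B_\pm u\rangle\bigr|
\le\|B_\pm\|_{L_p(2S_j)}\,\|u\|_{L_q(2S_j)}^2
\le C\varepsilon\Bigl(\tfrac12\|\boldsymbol{p}_{\boldsymbol{A}_\pm}u\|^2+\tfrac{1}{\ell_j^{3/2}}\|u\|^2\Bigr)\,\,,
\end{align*}
where the scaling $\ell_j^{-(3p-4)/2p}\cdot\ell_j^{(\text{Sobolev scaling})}=\ell_j^{-3/2}$ is exactly what the exponent in~(\ref{Shen1}) was designed to produce. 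Choosing $\varepsilon_0$ small enough that $C\varepsilon_0\le\tfrac12$, the $\tfrac12\|\boldsymbol{p}_{\boldsymbol{A}_\pm}u\|^2$ term is absorbed into the left-hand side, yielding $\|\boldsymbol{p}_{\boldsymbol{A}_\pm}u\|^2\le C(\|\boldsymbol{P}_{\boldsymbol{A}_\pm}u\|^2+\ell_j^{-3/2}\|u\|^2)$ on $2S_j$.

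Summing over $j$, using $\sum_j\phi_j^2\equiv 1$, the finite-overlap property of the supports $\{2S_j\}$ (a consequence of~(\ref{Shen3}), which also ensures $\sum_j\langle\phi_j\Psi_\pm,\boldsymbol{P}_{\boldsymbol{A}_\pm}^2\phi_j\Psi_\pm\rangle\le C\langle\Psi_\pm,\boldsymbol{P}_{\boldsymbol{A}_\pm}^2\Psi_\pm\rangle+\text{(localization error)}$ via the reverse IMS formula for $\boldsymbol{P}_{\boldsymbol{A}_\pm}$ as well), and the definition $\boldsymbol{\varPhi}=\sum_j\ell_j^{-3/2}\phi_j^2$, one obtains the claimed form inequality $\boldsymbol{p}_{\boldsymbol{A}_\pm}^2\le C\{\boldsymbol{P}_{\boldsymbol{A}_\pm}^2+\boldsymbol{\varPhi}\}$. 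The main obstacle, and the step deserving the most care, is the local Sobolev-type bound on $\langle u,B_\pm u\rangle$: one must verify that the two-dimensional embedding exponents match the scaling weight $\ell_j^{-(3p-4)/2p}$ so that the error is precisely $\ell_j^{-3/2}$ (this is where $p>4/3$ enters, guaranteeing $q<\infty$ and a genuine gain), and one must control the $\boldsymbol{A}_\pm$-dependence uniformly — handled by the diamagnetic inequality, which reduces everything to the non-magnetic gradient $\nabla|u|$ and hence makes all constants independent of $\boldsymbol{A}_\pm$, $B^{\rm ext}$ and $b^{\rm ind}$.
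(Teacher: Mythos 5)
Your proposal is correct and follows essentially the same route as the paper's proof (which itself follows Shen): IMS localization over the maximal dyadic squares, a local bound on $\int |B_\pm|\,|\phi_j\Psi_\pm|^2$ via H\"older, a Gagliardo--Nirenberg/Lad\'yzhenskaya-type interpolation and the diamagnetic inequality, absorption of the kinetic term for $\varepsilon$ small, and summation using the finite-overlap property to produce $\boldsymbol{\varPhi}$. The only cosmetic difference is that you use the dual pair $(p,q)$ and track the $\ell_j$-scaling explicitly in the local step, whereas the paper specializes H\"older to the $L_2$--$L_4$ pairing and invokes the Lad\'yzhenskaya inequality, relegating the $\ell_j^{-3/2}$ error entirely to the $|\boldsymbol{\nabla}\phi_j|^2$ localization term.
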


\begin{proof}
As in the proof of Theorems 3.1 and 5.1 in~\cite{Sen}, we start by showing that,
if $\varepsilon$ in (\ref{ShenEp}) is small, then
\begin{align}
\int_{\oR^2} ||B_\pm|^{1/2}\phi_j\Psi_\pm|^2\,\,dx
\leqslant \int_{\oR^2} |\boldsymbol{P}_{\boldsymbol{A}_\pm}(\phi_j\Psi_\pm)|^2\,\,dx\,\,,
\label{ShenTheoP1}
\end{align}
for any $\Psi_\pm \in C_0^\infty(\oR^2;\oC)$. To show (\ref{ShenTheoP1}), we shall use the generalized
Lad\'yzhenskaya inequality for 2D~\cite{Olga} proved by Constantin-Seregin~\cite{ConSer}
\begin{align}
\|\Psi\|_{2q}^4 \leqslant \frac{q}{\sqrt{2}}  \|\Psi\|_2^{2} \|\boldsymbol{\nabla}\Psi\|_q^{2}\,\,,
\quad \text{$\forall\,\Psi \in C_0^\infty(\oR^2)$ and $q \geqslant 2$}\,\,,
\label{ShenTheoP2}
\end{align}
This inequality is a special case of the Gagliardo-Nirenberg interpolation inequality.
Let $1/p+1/q=1$. It follows that for the left side of (\ref{ShenTheoP1}), by H\"older inequality, we have
\begin{align*}
\int_{\oR^2} ||B_\pm|^{1/2}\phi_j\Psi_\pm|^2\,\,dx
&=\int_{2S_j} |B_\pm| |\phi_j\Psi_\pm|^2\,\,dx \\[3mm]
&\leqslant \left(\int_{2S_j} |B_\pm|^2\,\,dx\right)^{1/2} \left(\int_{2S_j} |\phi_j\Psi_\pm|^{4}\,\,dx\right)^{1/2}\,\,.
\end{align*}
By Eq.(\ref{ShenEp}) the first term on the right side of the above equation is estimated by
\begin{align*}
\left(\int_{2S_j} |B_\pm|^2\,\,dx\right)^{1/2}
\leqslant \frac{\varepsilon}{\sqrt{2}}\,\,,
\end{align*}
while for the second term it follows that
\begin{align*}
\left(\int_{2S_j} |\phi_j\Psi_\pm|^{4}\,\,dx\right)^{1/2}
=\left[\left(\int_{2S_j} |\phi_j\Psi_\pm|^{4}\,\,dx\right)^{1/4}\right]^2
&\leqslant \left[\left(\int_{2S_j} |\phi_j\Psi_\pm|^{4}\,\,dx\right)^{1/4}\right]^4 \\[3mm]
&\leqslant \frac{2}{\sqrt{2}} \left(\int_{2S_j} |\phi_j\Psi_\pm|^{2}\,\,dx\right) 
\left(\int_{2S_j} |\boldsymbol{\nabla}\phi_j\Psi_\pm|^{2}\,\,dx\right) \\[3mm]
&\leqslant C \frac{2}{\sqrt{2}} \int_{2S_j} |\boldsymbol{\nabla}\phi_j\Psi_\pm|^{2}\,\,dx \\[3mm]
&\leqslant C \frac{2}{\sqrt{2}} \int_{\oR^2} |\boldsymbol{\nabla}\phi_j\Psi_\pm|^{2}\,\,dx \\[3mm]
&=C \frac{2}{\sqrt{2}} \int_{\oR^2} |\boldsymbol{\nabla}|\phi_j\Psi_\pm||^{2}\,\,dx\,\,,
\end{align*}
where we used (\ref{ShenTheoP2}). In this way, we get
\begin{align*}
\int_{\oR^2} ||B_\pm|^{1/2}\phi_j\Psi_\pm|^2\,\,dx
&\leqslant C \varepsilon \int_{\oR^2} |\boldsymbol{\nabla}|\phi_j\Psi_\pm||^{2}\,\,dx \\[3mm]
&\leqslant C \varepsilon \int_{\oR^2} |\boldsymbol{p}_{\boldsymbol{A}_\pm}(\phi_j\Psi_\pm)|^{2}\,\,dx \\[3mm]
&\leqslant C \varepsilon \int_{\oR^2} |\boldsymbol{P}_{\boldsymbol{A}_\pm}(\phi_j\Psi_\pm)|^{2}\,\,dx
+C \varepsilon \int_{\oR^2} ||B_\pm|^{1/2}\phi_j\Psi_\pm|^2\,\,dx\,\,,
\end{align*}
where we used the diamagnetic inequality~\cite{LiLo,LiSe}
$|\boldsymbol{\nabla}|\phi_j\Psi_\pm|| \leqslant |\boldsymbol{p}_{\boldsymbol{A}_\pm} (\phi_j\Psi_\pm)|$ in the
second inequality. Eq.(\ref{ShenTheoP1}) then follows by choosing $\varepsilon$ small so that $C\varepsilon < 1/2$.

Next we use Lemma \ref{LemmaShen}, (\ref{IMSform}) and (\ref{ShenTheoP1}) to obtain
\begin{align*}
\int_{\oR^2} |\boldsymbol{p}_{\boldsymbol{A}_\pm} \Psi_\pm|^{2}\,\,dx
&=\sum_j \int_{\oR^2} |\boldsymbol{p}_{\boldsymbol{A}_\pm}(\phi_j\Psi_\pm)|^{2}\,\,dx
+\sum_j \int_{\oR^2} |\boldsymbol{\nabla}\phi_j|^2\,|\Psi_\pm|^2\,\,dx \\[3mm]
&\leqslant \sum_j \int_{\oR^2} |\boldsymbol{P}_{\boldsymbol{A}_\pm}(\phi_j\Psi_\pm)|^{2}\,\,dx
+\sum_j \int_{\oR^2} ||B_\pm|^{1/2}\phi_j\Psi_\pm|^2\,\,dx \\[3mm]
&\qquad+\sum_j \int_{\oR^2} |\boldsymbol{\nabla}\phi_j|^2\,|\Psi_\pm|^2\,\,dx \\[3mm]
&\leqslant C \sum_j \int_{\oR^2} |\boldsymbol{P}_{\boldsymbol{A}_\pm}(\phi_j\Psi_\pm)|^{2}\,\,dx
+\sum_j \int_{\oR^2} |\boldsymbol{\nabla}\phi_j|^2\,|\Psi_\pm|^2\,\,dx \\[3mm]
&\leqslant C \sum_j \int_{\oR^2} |\boldsymbol{P}_{\boldsymbol{A}_\pm} \Psi_\pm|^{2}\,\,dx
+C\sum_j \frac{1}{\ell_j^2} \int_{2S_j} |\Psi_\pm|^2\,\,dx \\[3mm]
&\leqslant C \sum_j \int_{\oR^2} |\boldsymbol{P}_{\boldsymbol{A}_\pm} \Psi_\pm|^{2}\,\,dx
+C\sum_j \frac{1}{\ell_j^{3/2}} \int_{2S_j} |\Psi_\pm|^2\,\,dx\,\,.
\end{align*}

The theorem follows if we have
\begin{align}
\sum_j \frac{1}{\ell_j^{3/2}} \chi_{2S_j} \leqslant C \boldsymbol{\varPhi}\,\,,
\quad \text{where} \quad
\chi _{2S_j}(x)=
\begin{cases}
1 & {\rm if} \quad x \in 2S_j \\[3mm]
0 & {\rm if} \quad x \notin 2S_j
\end{cases}\,\,.
\label{ShenTheoP3}
\end{align}
Exactly as in the proof of Theorem 5.1 in~\cite{Sen}, notice that (\ref{Shen3}) implies that if $x \in S_k$,
the l.h.s. of (\ref{ShenTheoP3}) is bounded by
\begin{align*}
\sum_{2S_j \cap S_k \not= \varnothing} \frac{1}{\ell_j^{2}}
\leqslant C \frac{1}{\ell_j^{2}} \cdot \# \Bigl\{S_j \mid 2S_j \cap S_k \not= \varnothing\Bigr\}
\leqslant C \frac{1}{\ell_k^{2}} \leqslant C \frac{1}{\ell_k^{3/2}} \leqslant C \boldsymbol{\varPhi}\,\,.
\end{align*}
The proof is now complete.
\end{proof}

Finally we are in a position to give the

\begin{proof}[Proof of Theorem \ref{Sobolev2}]
Our proof proceeds along the same lines as Sobolev~\cite{Sob} and Shen~\cite{Sen}, considering the case $\gamma=1$
only. For $\lambda > 0$, we denote by $N(\lambda;|\boldsymbol{P}_{\boldsymbol{A}_\pm}|+V_s)$ the number of eigenvalues
(counting multiplicity) of $|\boldsymbol{P}_{\boldsymbol{A}_\pm}|+V_s$ smaller than $-\lambda$. Noting that for any
potential $V$ and $\lambda > 0$
\begin{align*}
V+\lambda \geqslant -V_-+\lambda \geqslant -(V+\lambda/2)_-+\lambda/2\,\,,
\end{align*}
the variational principle says that the operator inequalities
\begin{align*}
f(\boldsymbol{p}_{\boldsymbol{A}_\pm})+V+\lambda
\geqslant
f(\boldsymbol{p}_{\boldsymbol{A}_\pm})-V_-+\lambda
\geqslant
f(\boldsymbol{p}_{\boldsymbol{A}_\pm})-(V+\lambda/2)_-
+\lambda/2\,\,,
\end{align*}
imply that
\begin{align*}
N(\lambda;f(\boldsymbol{p}_{\boldsymbol{A}_\pm})+V)
\leqslant
N(\lambda/2;f(\boldsymbol{p}_{\boldsymbol{A}_\pm})-(V+\lambda/2)_-)\,\,,
\end{align*}
and hence by the BKS inequalities (\ref{BKSinq}), with $\alpha=1/2$, we have
\begin{align*}
{\rm Tr}(|\boldsymbol{P}_{\boldsymbol{A}_\pm}|+V_s)
={\rm Tr}(|\boldsymbol{P}_{\boldsymbol{A}_\pm}|-V_-) 
&\leqslant {\rm Tr}(\boldsymbol{P}_{\boldsymbol{A}_\pm}^2-V_-^2)^{1/2} \\[3mm]
&=\frac{1}{2} \int_0^\infty \lambda^{-1/2}
\,N(\lambda;(\boldsymbol{P}_{\boldsymbol{A}_\pm}^2-V_-^2)\,\,d\lambda \\[3mm]
&\leqslant \frac{1}{2} \int_0^\infty \lambda^{-1/2}
\,N(\lambda/2;\boldsymbol{P}_{\boldsymbol{A}_\pm}^2-(V^2+\lambda/2)_-)\,\,d\lambda \\[3mm]
&=\frac{1}{\sqrt{2}} \int_0^\infty \mu^{-1/2}
\,N(\mu;\boldsymbol{P}_{\boldsymbol{A}_\pm}^2-(V^2+\mu)_-)\,\,d\mu\,\,.
\end{align*}

Now, we use the running-energy-scale method~\cite{LLS}, in a slightly different way as suggested
by Bley-Fournais~\cite[Theorem 2.9]{GoFo}, and Theorem \ref{ShenTheo}, to write the last integral above as
\begin{align*}
\frac{1}{\sqrt{2}} \int_0^\infty \mu^{-1/2}
\,N(\mu;\boldsymbol{P}_{\boldsymbol{A}_\pm}^2-(V^2+\mu)_-)\,\,d\mu
&\leqslant \frac{1}{\sqrt{2}} \int_0^\infty \mu^{-1/2}
\,N(\mu;\zeta\boldsymbol{P}_{\boldsymbol{A}_\pm}^2-(V^2+\mu)_-)\,\,d\mu \\[3mm]
&\leqslant \frac{1}{\sqrt{2}} \int_0^\infty \mu^{-1/2}
\,N(\mu;\zeta C^{-1}\boldsymbol{p}_{\boldsymbol{A}_\pm}^2-\zeta \boldsymbol{\varPhi}-(V^2+\mu)_-)\,\,d\mu \\[3mm]
&= \frac{1}{\sqrt{2}} \int_0^\infty \mu^{-1/2}
\,N(\mu;\boldsymbol{p}_{\boldsymbol{A}_\pm}^2-C\boldsymbol{\varPhi}-(V^2/\zeta+\mu/\zeta)_-)\,\,d\mu\,\,,
\end{align*}
where $\zeta \in (0,1)$ is a parameter which is to be optimized later. By treating the term $\boldsymbol{\varPhi}$
as a potential, in the sequel we shall use the bound (for the magnetic momentum $\boldsymbol{p}_{\boldsymbol{A}_\pm}$)
\begin{align}
N_\alpha(V) \leqslant C_n \int_{\oR^n} (V(x)-\alpha)_-^{n/2}\,\,dx\,\,,
\label{Daub}
\end{align}
for some constant $C_n$ (see Daubechies~\cite[Remark 5, p.517]{Dau}). In fact, since
$(\boldsymbol{p}_{\boldsymbol{A}_\pm}+\lambda)^{-1} \leqslant (\boldsymbol{p}+\lambda)^{-1}$ we immediately have
${\rm Tr}(f(K_\lambda(V;\boldsymbol{A}_\pm))) \leqslant {\rm Tr}(f(K_\lambda(V;0))$, where $K_\lambda$ is the
Birman-Schwinger kernel, and this implies that $N_\alpha(V;\boldsymbol{A}_\pm) \leqslant C_n \int_{\oR^n} (V(x)-\alpha)_-^{n/2}$,
with the same constant as in (\ref{Daub}) (however, it is important to point out that it is not true in general that
$N_\alpha(V;\boldsymbol{A}_\pm) \leqslant N_\alpha(V;0)$; see Example 2 following Theorem 2.14 in~\cite{AHS}).

Returning to our proof, using the bound (\ref{Daub}) it follows that
\begin{align}
{\rm Tr}(|\boldsymbol{P}_{\boldsymbol{A}_\pm}|+V_s) 
\leqslant \frac{C_2}{\sqrt{2}} \int_0^\infty \mu^{-1/2}
\left(\int_{\oR^2} \,(-C\boldsymbol{\varPhi}+V^2/\zeta+\mu/\zeta)_-\,\,dx\right)\,\,d\mu\,\,.
\label{Bley}
\end{align}

Note that, with the change of variable $\eta=\mu/\zeta$ in (\ref{Bley}), from Remark \ref{Rema}, for any fixed $x$,
only $\eta$'s with $0 < \eta < V_-^2/\zeta+C\boldsymbol{\varPhi}$ contribute to the integral in the variable $\eta$.
Therefore, changing the order of integration, we have
\begin{align*}
{\rm Tr}(|\boldsymbol{P}_{\boldsymbol{A}_\pm}|+V_s) 
\leqslant \frac{C_2}{\sqrt{2}} \int_{\oR^2}
\left(\zeta^{1/2} \int_0^{V_-^2/\zeta+C\boldsymbol{\varPhi}} \,\eta^{-1/2}(-C\boldsymbol{\varPhi}+V^2/\zeta+\eta)_-\,\,d\eta\right)\,\,dx\,\,.
\end{align*}
with the integral in the variable $\eta$ being calculated through the following formula:
\begin{align*}
\int_0^{u} dx\,\,x^{\nu-1} (u-x)^{\mu-1}=u^{\mu+\nu-1} \frac{\Gamma(\mu)\Gamma(\nu)}{\Gamma(\mu+\nu)}\,\,,
\quad \text{[${\rm Re}\,\mu > 0, {\rm Re}\,\nu > 0$]}\,\,.
\end{align*}
This gives us
\begin{align*}
{\rm Tr}(|\boldsymbol{P}_{\boldsymbol{A}_\pm}|+V_s) 
&\leqslant \frac{C_2}{\sqrt{2}} \frac{\Gamma(2)\Gamma(1/2)}{\Gamma{(5/2)}} \zeta^{1/2}
\int_{\oR^2} (V_-^2/\zeta+C\boldsymbol{\varPhi})^{3/2}\,\,dx \\[3mm]
&\leqslant \frac{2^{3/2}C_2}{3}
\left(\zeta^{-1} \int_{\oR^2} V_-^3\,\,dx+\zeta^{1/2} C^{3/2} \int_{\oR^2} \boldsymbol{\varPhi}^{3/2}\,\,dx\right) \\[3mm]
&\equiv \frac{2^{3/2}C_2}{3} \left(\omega^{-2}M+\omega N\right)\,\,,
\end{align*}
where $C_2$ is the constant in the bound (\ref{Daub}). We follow on optimizing this with respect to $\omega$; in fact,
we see that the optimal $\omega$ is
\begin{align*}
\min\,\left\{1,\left(\frac{2M}{N}\right)^{1/3}\right\}\,\,,
\end{align*}
and as a consequence
\begin{align*}
\omega^{-2}M+\omega N &\leqslant
\begin{cases}
2^{-2/3}3M^{1/3}N^{2/3}\,\,, & \text{if} \quad 2M \leqslant N \\[3mm]
3M\,\,, & \text{if} \quad 2M > N
\end{cases} \\[3mm]
&\leqslant 2^{-2/3}3M^{1/3}N^{2/3}+3M\,\,.
\end{align*}
So we have
\begin{align}
{\rm Tr}(|\boldsymbol{P}_{\boldsymbol{A}_\pm}|+V_s) \leqslant
C_{1} \int_{\oR^2} V_-^3(x)\,\,dx
+C'_{2} \left(\int_{\oR^2} \boldsymbol{\varPhi}^{3/2}(x)\,\,dx\right)^{2/3}
\left(\int_{\oR^2} V_-^3(x)\,\,dx\right)^{1/3}\,\,,
\label{MODE}
\end{align}
where $C_1 \overset{\rm def.}{=} 2^{3/2} C_2$ and $C'_2 \overset{\rm def.}{=} 2^{1/2} C_2 C^{3/2}$.

As a penultimate step, taking into account that $V_-(x)=\gamma K_0(\beta |x|)$ and that $K_0(\beta |x|)$
is a positive function, we will use a version of the reverse H\"older inequality~\cite{Bene} (in the literature
there are many versions of this inequality) which establishes the following: consider H\"older conjugate 
exponents $p$ and $q$, with $p^{-1}+q^{-1}=1$, and assume that $p \in (0,1)$. Let $(\cal{M},\mu)$ be a
measure space, with $\mu(M) > 0$; then for all measurable real- or complex-valued functions $f$ and $g$
on $\cal{M}$, such that $g(x) > 0$ for $\mu$-almost all $x \in \cal{M}$,
\begin{align*}
\left(\int _{\cal{M}} |f(x)|^{\frac{1}{p}}\,\,d\mu(x)\right)^{p} \left(\int _{\cal{M}} |g(x)|^{\frac{1}{q}}\,\,d\mu(x)\right)^{q}
\leqslant \|fg\|_1\,\,.
\end{align*}

Armed with this result, notice that for $p=2/3$, the second term in (\ref{MODE})
$\leqslant C'_{2} \int_{\oR^2} \boldsymbol{\varPhi}(x) V_-(x)\,\,dx$. Then, by an argument similar to that in~\cite[Lemma 4.5]{Sen},
we have that $\boldsymbol{\varPhi}(x) \approx {\mathfrak b}_p(x)$ for $p > 4/3$, and from Proposition \ref{Shen2A},
we get
\begin{align}
{\rm Tr}(|\boldsymbol{P}_{\boldsymbol{A}_\pm}|+V_s) 
&\leqslant C_{1} \int_{\oR^2} V_-^3(x)\,\,dx
+C'_{2} \int_{\oR^2} {\mathfrak b}_p(x) V_-(x)\,\,dx \nonumber \\[3mm]
&\leqslant C_{1} \int_{\oR^2} V_-^3(x)\,\,dx
+C''_{2} \Bigl(\|eB^{\rm ext}\|_\infty+\|gb^{\rm ind}\|_p^{3p/(3p-4)}\Bigr)
\int_{\oR^2} V_-(x)\,\,dx\,\,.
\label{ErdosShen}
\end{align}

Finally, using polar coordinates, with $V_-(x)=\gamma_s K_0(\beta |x|)$, we obtain:  $\int_{\oR^2} V_-(x)=2\pi \gamma_s \beta^{-2}$
and $\int_{\oR^2} V_-^3(x)=2\pi \gamma_s^3 \beta^{-2}(1/6\psi'(1/3)-\pi^2/9)$. These integrals are obtained, respectively, 
in~\cite[Formula {\bf 6.561}, 16., p.676]{Grad} and in~\cite[Formula {\bf 4.10.2.2.}, p.216]{Yuri}. This concludes the proof.
\end{proof}

\begin{remark}
By adopting inequality (\ref{BKSinq}) as one of the strategies of proof of Theorem \ref{Sobolev2}, we saw that it
has been effectively possible to replace $|\boldsymbol{P}_{\boldsymbol{A}_\pm}|$
by $\boldsymbol{P}_{\boldsymbol{A}_\pm}^2=\boldsymbol{p}_{\boldsymbol{A}_\pm}^2-B_\pm$, but this has a direct
impact on the model describing pristine graphene as proposed in Refs.~\cite{{WOE,Em}}. As we are assuming that
$B^{\rm ext} > 0$ and $b^{\rm ind}$ is a positive function, the total field $B_\pm=eB^{\rm ext} \pm gb^{\rm ind}$ can
be positive or negative depending on the sign of the charges $e$ and $g$ of the quasi-particles (see Table \ref{table1}).
In this case, for the $s$-wave state, the upper bound (\ref{ErdosShen}) for the bound states of
(electron-polaron)--(electron-polaron) does not contain the term depending on $B^{\rm ext}$ and $b^{\rm ind}$.
This is because the sum of negative eigenvalues of the operator $\boldsymbol{p}_{\boldsymbol{A}_\pm}^2+B_\pm+V_s$
can be estimated by sum of negative eigenvalues of the operator $\boldsymbol{p}_{\boldsymbol{A}_\pm}^2+V_s$. The
situation is reversed if we adopt $B^{\rm ext} < 0$; in this case, for the $s$-wave state, the upper bound (\ref{ErdosShen})
for the bound states of (hole-polaron)--(hole-polaron) does not contain the term depending on $B^{\rm ext}$ and $b^{\rm ind}$.
\label{RemaBS}
\end{remark}

An important and non-trivial consequence of the Theorem \ref{Sobolev2} are the bounds on the spectrum of
$|\boldsymbol{P}_{\boldsymbol{A}_\pm}|+V_s$, namely
\begin{align*}
|\boldsymbol{P}_{\boldsymbol{A}_\pm}|-\gamma_s K_0(\beta|x|)
&\geqslant -\sum_k |\lambda_k^\pm(|\boldsymbol{P}_{\boldsymbol{A}_\pm}|+V_s)| \\[3mm]
&\geqslant -C_{1} \gamma_s^3\,\beta^{-2}-C''_{2} \gamma_s\,\beta^{-2}\Bigl(\|eB^{\rm ext}\|_\infty
+\|gb^{\rm ind}\|_p^{3p/(3p-4)}\Bigr)\,\,.
\end{align*}
The first inequality just expresses the fact that, because of the Pauli principle for fermions, the
lowest possible energy (left hand side of above equation) is obtained when the quasi-particles assume
the states of $s$-wave, as mentioned in the Introduction. The second inequality concerns the trace over
the negative spectrum of the operator $|\boldsymbol{P}_{\boldsymbol{A}_\pm}|+V_s$ analyzed in
Theorem \ref{Sobolev2}. Therefore, the bound state formed by a quasi-particle tied to a field of another
quasi-particle, and subject to relativistic effects, is stable if the inequality
\begin{align}
\Biggl\langle \Psi,
\Biggl[|\boldsymbol{P}_{\boldsymbol{A}_\pm}|-\gamma_s K_0(\beta|x|)+C_{1} \gamma_s^3\,\beta^{-2}
+C''_{2} \gamma_s\,\beta^{-2}\Bigl(\|eB^{\rm ext}\|_\infty+\|gb^{\rm ind}\|_p^{3p/(3p-4)}\Bigr)\Biggr]\Psi \Biggr\rangle
\geqslant 0\,\,,
\label{StableCond}
\end{align}
is satisfied. In Eq.(\ref{StableCond}) we can allow the constant in front of $K_0(\beta|x|$) to go all the way
to the critical constant 
\begin{align*}
\gamma_{\rm crit}=\left(\frac{2 (2\pi)^{3/2} e^{1/2}}{[\Gamma(1/4)]^4}\right) \beta
\approx 0,3 \beta\,\,,
\quad \text{(in units so that $\hbar=v_F=1$)}\,\,,
\end{align*}
obtained in~\cite{MOD2} through the relativistic Hardy inequality, {\em i.e.}, we can take $\gamma_s=\frac{1}{2\pi}(g^2-e^2)$
and $\beta=\frac{3,33}{2\pi}\,(g^2-e^2)$ in Eq.(\ref{StableCond}) in order to obtain
\begin{align}
\Biggl\langle \Psi,
\Biggl[|\boldsymbol{P}_{\boldsymbol{A}_\pm}|&-\frac{1}{2\pi}(g^2-e^2) K_0(\beta|x|)
+\frac{3,33}{2\pi}\,C_{1}(g^2-e^2) \nonumber \\[3mm]
&+\frac{0,6\pi\,C''_{2}}{(g^2-e^2)} \Bigl(\|eB^{\rm ext}\|_\infty+\|gb^{\rm ind}\|_p^{3p/(3p-4)}\Bigr)\Biggr]\Psi \Biggr\rangle
\geqslant 0\,\,.
\label{StableCond1}
\end{align} 
Therefore, we see that the state of a quasi-particle tied to a field of another quasi-particle, and subject to
relativistic effects, is stable when the constant $\gamma_s$ is allowed to go all the way to the critical constant
$\gamma_{\rm crit}$, as long as the field energy is added with $|g| > |e|$ and with sufficiently large positive
constants $C_1$ and $C_2$ in front; in other words, we have the

\begin{proposition}[Stability of an artificial atom in magnetic fields]
Assume that $\gamma_s=\gamma_{\rm crit}$. Then, for $|g| > |e|$ and sufficiently large positive
constants $C_1$ and $C_2$, it follows that
\begin{align*}
|\boldsymbol{P}_{\boldsymbol{A}_\pm}|-\frac{1}{2\pi}(g^2-e^2) K_0(\beta|x|)
+\frac{3,33}{2\pi}\,C_{1}(g^2-e^2)
+\frac{0,6\pi\,C_{2}}{(g^2-e^2)} \Bigl(\|eB^{\rm ext}\|_\infty+\|gb^{\rm ind}\|_p^{3p/(3p-4)}\Bigr)
\geqslant 0\,\,.
\end{align*}
\end{proposition}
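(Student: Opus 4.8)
The plan is to derive this Proposition as a corollary of Theorem~\ref{Sobolev2}: the asserted operator inequality is precisely~\eqref{StableCond1} read at the critical coupling, so the work consists of (i) turning the Lieb--Thirring bound~\eqref{Sobolev1} into a genuine form lower bound for $|\boldsymbol{P}_{\boldsymbol{A}_\pm}|+V_s$, and (ii) carrying out the substitution $\gamma_s=\gamma_{\rm crit}$ together with the value of $\gamma_{\rm crit}$ supplied by~\cite{MOD2}.

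For step (i), recall that $V_s=-V_-$ with $V_-(x)=\gamma_s K_0(\beta|x|)$ is relatively $(-i\boldsymbol{\sigma}\cdot\boldsymbol{D}_{\boldsymbol{A}_\pm})$-compact by Proposition~\ref{Prop3}, so $|\boldsymbol{P}_{\boldsymbol{A}_\pm}|+V_s$ has the same essential spectrum as $|\boldsymbol{P}_{\boldsymbol{A}_\pm}|$, which is contained in $[0,\infty)$. Hence its negative spectrum consists of isolated eigenvalues $\lambda_k^\pm<0$ of finite multiplicity, accumulating at most at $0$, and by Theorem~\ref{Sobolev2} the sequence $(|\lambda_k^\pm|)_k$ is summable; in particular
\[
|\boldsymbol{P}_{\boldsymbol{A}_\pm}|+V_s\;\geqslant\;-\sum_k|\lambda_k^\pm(|\boldsymbol{P}_{\boldsymbol{A}_\pm}|+V_s)|
\]
as quadratic forms on the form domain of $|\boldsymbol{P}_{\boldsymbol{A}_\pm}|$. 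Inserting the right-hand side of~\eqref{Sobolev1} yields~\eqref{StableCond} with $C_1,C_2$ the constants supplied by Theorem~\ref{Sobolev2}; since the two added terms are non-negative, replacing $C_1,C_2$ by any larger constants keeps the inequality valid.

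For step (ii), set $\gamma_s=\gamma_{\rm crit}$. By the relativistic Hardy inequality of~\cite{MOD2} one has $\gamma_{\rm crit}=\bigl(2(2\pi)^{3/2}e^{1/2}/[\Gamma(1/4)]^4\bigr)\beta\approx 0,3\,\beta$, i.e.\ $\beta\approx 3,33\,\gamma_{\rm crit}$; since $\gamma_s=\frac{1}{2\pi}(g^2-e^2)$, the equality $\gamma_s=\gamma_{\rm crit}$ pins down $\beta=\frac{3,33}{2\pi}(g^2-e^2)$, which is strictly positive exactly because $|g|>|e|$ --- this is also what makes $V_s$ attractive. Substituting $\beta=3,33\,\gamma_s$ into the additive terms of~\eqref{StableCond}, namely $C_1\gamma_s^3\beta^{-2}=(3,33)^{-2}C_1\gamma_s$ and $C_2\gamma_s\beta^{-2}\bigl(\|eB^{\rm ext}\|_\infty+\|gb^{\rm ind}\|_p^{3p/(3p-4)}\bigr)=(3,33)^{-2}C_2\gamma_s^{-1}\bigl(\|eB^{\rm ext}\|_\infty+\|gb^{\rm ind}\|_p^{3p/(3p-4)}\bigr)$, and rewriting $\gamma_s$ and $\gamma_s^{-1}$ in terms of $(g^2-e^2)$, one recovers --- after an innocuous renaming of the constants --- exactly~\eqref{StableCond1}, which is the statement of the Proposition.

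The statement is thus essentially immediate, and the one point I would be careful about is step (i) at the \emph{borderline} coupling $\gamma_s=\gamma_{\rm crit}$: there $V_s$ is form-bounded relative to $|\boldsymbol{P}_{\boldsymbol{A}_\pm}|$ with relative bound one, so one must verify that $|\boldsymbol{P}_{\boldsymbol{A}_\pm}|+V_s$ still admits a self-adjoint realization that is bounded below, with a negative spectrum that is purely discrete and $\ell^1$ --- only then are $\sum_k|\lambda_k^\pm|$ and the form lower bound above meaningful. This is precisely what the sharp inequality of~\cite{MOD2}, combined with the relative compactness in Proposition~\ref{Prop3} (which controls both the logarithmic singularity of $K_0$ at the origin and the behaviour on the infinite-dimensional lowest Landau subspace), is there to provide; granting it, the remainder of the argument is the bookkeeping described above.
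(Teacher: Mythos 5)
Your proposal follows essentially the same route as the paper: the Proposition is obtained there by combining the elementary form bound $|\boldsymbol{P}_{\boldsymbol{A}_\pm}|+V_s\geqslant-\sum_k|\lambda_k^\pm|$ with the Lieb--Thirring estimate of Theorem~\ref{Sobolev2} and then substituting $\gamma_s=\gamma_{\rm crit}$, $\beta=3{,}33\,\gamma_s$, exactly as in your steps (i) and (ii), with the numerical prefactors absorbed into the ``sufficiently large'' constants. Your closing caveat about the borderline coupling (discreteness and summability of the negative spectrum, lower-boundedness of the self-adjoint realization) is a point the paper passes over silently, so your write-up is, if anything, slightly more careful than the original.
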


The non-triviality of the above proposition lies in the fact that the operator
$|\boldsymbol{P}_{\boldsymbol{A}_\pm}|-\gamma_s K_0(\beta|x|)$ is not positive for any magnetic field.
It is the existence of zero modes of the operator $\boldsymbol{P}_{\boldsymbol{A}_\pm}$ which makes this
possible: a zero mode of $\boldsymbol{P}_{\boldsymbol{A}_\pm}$ is an eigenvector $\Psi$ corresponding to an
eigenvalue at 0, thus $\boldsymbol{P}_{\boldsymbol{A}_\pm}\Psi=0$ (despite the existence of zero modes of 
Weyl-Dirac operators be a rare phenomenon~\cite{BaEv}).

\begin{remark}
Another interesting problem to ask is whether a constant greater than a critical constant could compromise
the above stability. As first observed in an accompanying paper~\cite{MOD2}, analyzing the two-dimensional
Brown-Ravenhall operator with an attractive potential of the Bessel-Macdonald type, the kinetic term will always
dominate $V_-(x)=\gamma_s K_0(\beta|x|)$, regardless of the value of the coupling constant $\gamma_s$.
This result is a characteristic of the $K_0$-potential. The reason for this behavior is that unlike to usual Weyl-Dirac
operator with coulombian potential, the Weyl-Dirac operator with $K_0$-potential is not homogeneous with respect
to scalings of $\oR^2$, {\em i.e.}, the kinetic energy does not have the same behavior under scaling as the
Bessel-Macdonald energy for large momenta. In light of this result, we can conclude that a complete implosion of
an artificial atom in magnetic fields will never occur for two quasi-particles interacting via an attractive potential of
the Bessel-Macdonald type.
\end{remark}

\section*{Acknowledgements}
\hspace*{\parindent}
Oswaldo M. Del Cima and Daniel H.T. Franco thank Eduardo N.D. de Ara\'ujo for stimulating discussions on the physics
of graphene. Daniel H.T. Franco thanks the private communications with R. Frank, who indicated the
reading of articles by L. Erd\H{o}s and J.P. Solovej (through which we became aware of the articles by A. Sobolev, L. Bugliaro
{\em et. al.} and Z. Shen), and with S. Fournais about his article in collaboration with G. Bley. Emmanuel A. Pereira 
was partially supported by the Conselho Nacional de Desenvolvimento Cient\'\i fico e Tecnol\'ogico (CNPq).

\section*{Author's contributions}
\hspace*{\parindent}
All authors contributed equally to this work. On behalf of all authors, the corresponding author states that
there is no conflict of interest. 

\section*{Data availability}
\hspace*{\parindent}
The data that support the findings of this study are available from the corresponding author upon reasonable request.



\end{document}